\documentclass{lmcs}

\newcommand{\myenvalias}[2]{\newenvironment{#1}{\begin{#2}}{\end{#2}}}
\myenvalias{theorem}{thm}
\myenvalias{lemma}{lem}
\myenvalias{corollary}{cor}
\myenvalias{proposition}{prop}
\myenvalias{definition}{defi}
\myenvalias{notation}{nota}
\myenvalias{example}{exa}
\myenvalias{remark}{rem}
\myenvalias{claim}{clm}

\usepackage{amsmath,amssymb,amsfonts,mathtools}
\usepackage{mathrsfs}
\usepackage[dvipsnames]{xcolor}
\usepackage{tikz}
\usetikzlibrary{automata, positioning, arrows.meta, fit, calc, decorations.pathmorphing, shapes, patterns}
\usepackage{hyperref}

\usepackage[capitalise,noabbrev,nameinlink]{cleveref}
\crefname{thm}{Theorem}{Theorems}
\crefname{lem}{Lemma}{Lemmas}
\crefname{cor}{Corollary}{Corollaries}
\crefname{prop}{Proposition}{Propositions}
\crefname{defi}{Definition}{Definitions}
\crefname{nota}{Notation}{Notations}
\crefname{exa}{Example}{Examples}
\crefname{rem}{Remark}{Remarks}
\crefname{clm}{Claim}{Claims}
\crefname{fact}{Fact}{Facts}

\usepackage{stmaryrd}
\usepackage{forest}

\usepackage[linesnumbered, vlined, ruled]{algorithm2e}
\SetKwInput{Input}{Input}
\SetKwInput{Output}{Output}
\crefname{algocf}{Algorithm}{Algorithms}

\newcommand{\N}{\mathbb{N}}
\newcommand{\Z}{\mathbb{Z}}
\newcommand{\Q}{\mathbb{Q}}

\newcommand{\sem}[1]{\llbracket{#1}\rrbracket}

\newenvironment{decisionproblem}[3]{
  \begin{quote}
  \noindent\textsc{#1}\\
  \textbf{Instance:} #2\\
  \textbf{Question:} #3
  \end{quote}
}

\renewcommand{\vec}[1]{\mathbf{#1}}
\newcommand{\card}[1]{ \lvert #1 \rvert }
\newcommand{\restr}[2]{{#1}_{\mid #2}}

\newcommand{\BB}{\mathcal{B}}
\newcommand{\EE}{\mathcal{E}}
\newcommand{\VV}{\mathcal{V}}
\newcommand{\ZZ}{\mathcal{Z}}
\newcommand{\TT}{\mathcal{T}}
\newcommand{\SpS}{\mathcal{S}}
\newcommand{\RR}{\mathcal{R}}

\newcommand{\Sect}{\boldsymbol{\mathrm{Sect}}}
\newcommand{\Reach}{\boldsymbol{\mathrm{Reach}}}
\newcommand{\Extend}{\mathrm{Extend}}
\newcommand{\arity}{\mathrm{ar}}

\newcommand{\target}{\boldsymbol{\mathrm{tgt}}}

\newcommand{\tgt}[1]{\boldsymbol{tgt}(#1)} %
\newcommand{\ra}[1]{\operatorname{ar}(#1)}%

\newcommand{\source}{\boldsymbol{\mathrm{src}}}
\newcommand{\counters}{\boldsymbol{\mathrm{cnt}}}

\newcommand{\Runs}{\boldsymbol{\mathrm{Runs}}}

\newcommand{\Rule}{\text{rule}}
\DeclareMathOperator{\bd}{bd}

\DeclareMathOperator{\sn}{sn}

\begin{document}

\title{Bridging the Gap Between Plain VASS and Branching VASS}

\thanks{%
This work was supported by the grant ANR-25-CE48-6933
of the French National Research Agency
(project CoqoPetri).
Clotilde Bizière was supported by the Polish National Science Centre
under SONATA BIS-12 grant no. 2022/46/E/ST6/00230.
}

\author[C.~Bizière]{Clotilde Bizière\lmcsorcid{0009-0003-6469-1170}}[a,b]
\author[J.~Leroux]{Jérôme Leroux\lmcsorcid{0000-0002-7214-9467}}[a]
\author[G.~Sutre]{Grégoire Sutre\lmcsorcid{0009-0004-3839-0005}}[a]

\address{University of Bordeaux, CNRS, Bordeaux INP, LaBRI, UMR 5800, F-33400 Talence, France}

\address{University of Warsaw, Poland}

\begin{abstract}
  Vectors addition systems with states (VASS), a model equivalent to Petri nets, are finite-state machines with finitely many counters ranging over the natural numbers. The decidable reachability problem for VASS has many applications in logic, automata, and verification.
  In this paper we study the reachability problem for BVASS, a branching
  generalization of VASS. We show that BVASS reachability sets are very similar to VASS reachability sets,
  namely that they are sections of VASS. Our proof relies on a new well-quasi-order (wqo) on BVASS runs that generalizes the
  well-known wqo on VASS runs. By leveraging an amalgamation property, we prove that every BVASS run can be transformed into
  an equivalent one of bounded branching complexity. This allows us to derive several results on the geometry of
  BVASS reachability sets. As an application we obtain that reachability sets of
  $5$-dimensional BVAS are effectively semilinear, as is the case for $5$-dimensional VAS.
  \keywords{%
    Branching VASS \and
    Reachability problem \and
    Strahler number \and
    Wqo \and
    Semilinear set.
  }
\end{abstract}

\maketitle

\section{Introduction} \label{sec:intro}

\paragraph{Context}
Vectors addition systems with states (VASS), a model equivalent to Petri nets, are finite-state machines with finitely many counters ranging over the natural numbers.
Operations on counters are limited to increment and guarded decrement (a counter can be decremented only if it remains non-negative).
A central decision problem for VASS is reachability: whether there exists a run from an initial configuration to a final one.
This problem was shown decidable more than forty years ago~\cite{Mayr84} and was recently revisited~\cite{DBLP:conf/lics/LerouxS19,DBLP:journals/jacm/CzerwinskiLLLM21,DBLP:conf/focs/Leroux21}.
The decidability of the reachability problem for VASS is the cornerstone of many decidability results in logic, automata, and verification~\cite{DBLP:journals/siglog/Schmitz16}.

\smallskip

Several VASS extensions have been proposed to increase the expressive power of the model, most notably VASS with nested zero tests~\cite{REINHARDT2008239}, pushdown VASS~\cite{DBLP:conf/fsttcs/AtigG11,Lazic2013}, unordered data nets~\cite{DBLP:journals/fuin/LazicNORW08}, and branching VASS~\cite{GGS04,DBLP:journals/dmtcs/VermaG05}. 
The reachability problem is decidable for VASS with nested zero tests~\cite{REINHARDT2008239,CGL25} and, as shown very recently, for pushdown VASS~\cite{guttenberg_et_al:LIPIcs.LICS.2026.53}. It remains open for unordered data nets and branching VASS.

\smallskip

In this paper, we investigate the reachability problem for branching VASS (shortly called BVASS in the sequel), a branching generalisation of VASS. More precisely, BVASS extend VASS with special branching transitions that merge configurations (by summing their vectors). Thus runs for BVASS are trees of configurations (whereas they are sequences of configurations for plain VASS). The BVASS model has gained a lot of interest recently due to strong links with several fields in computer science such as cryptographic protocols~\cite{DBLP:journals/dmtcs/VermaG05}, linear logic~\cite{GGS04,LazicS15}, recursively parallel programs~\cite{DBLP:journals/toplas/BouajjaniE13}, timed pushdown systems~\cite{DBLP:conf/lics/ClementeLLM17}, computational linguistic~\cite{DBLP:conf/acl/Rambow94,DBLP:conf/acl/Schmitz10}, game semantics~\cite{DBLP:conf/esop/Cotton-BarrattM17}, equational tree automata~\cite{DBLP:conf/csl/Ohsaki01,DBLP:conf/fossacs/Lugiez03} and data logics~\cite{DBLP:journals/corr/JacquemardSD16,DBLP:conf/pods/BojanczykDMSS06}.
As mentioned before, the reachability problem is still open in arbitrary dimension for BVASS. In dimensions one and two, the reachability problem is decidable~\cite{DBLP:conf/mfcs/BiziereHLS25} and the exact complexity is known in dimension one~\cite{GollerHLT16,DBLP:conf/icalp/FigueiraLLMS17}.

\paragraph{Contributions}
We introduce a well-quasi-order (wqo) on the set of runs of a BVASS that generalises the well-known wqo on VASS runs. As in the case of VASS, this wqo satisfies the \emph{amalgamation property}. Using this property, we prove that reachable configurations are reachable by runs of bounded branching complexity. The idea is to transfer complex computations from one node to a descendant of a sibling node. From this simple form of runs, we deduce several results.
\begin{itemize}
\item Reachability sets of BVASS are VASS sections, \emph{i.e.}, projections on a subset of counters of VASS reachability sets intersected with semilinear sets. From this characterisation, we derive that BVASS reachability sets are almost semilinear, as for VASS. Moreover, we prove that when the reachability set of a BVASS is semilinear then it is effectively computable, again as for VASS.
\item The so-called Strahler-bounded reachability problem is decidable. This problem asks whether a given configuration is reachable by a run whose Strahler number is at most a given bound. We solve that problem by reduction to the reachability problem of VASS with nested zero tests~\cite{REINHARDT2008239}. The Strahler-bounded reachability problem can be seen as a precise under-approximation of the BVASS reachability problem since we prove that for every BVASS, there exists a uniform bound such that every reachable configuration is reached by a run whose Strahler number is at most that bound.
\item We then focus on small dimensions and provide an iterative fix-point algorithm based on semilinear sets. The termination of this algorithm is guaranteed by the previously shown uniform bound. This iterative algorithm shows that reachability sets are effectively semilinear for $2$-dimensional BVASS and $5$-dimensional BVAS (BVAS are BVASS with a single state). For $2$-BVASS, another algorithm computing the semilinear reachability set was given in~\cite{DBLP:conf/mfcs/BiziereHLS25}, but the proof of termination was significantly longer and more involved.
\end{itemize}

\paragraph{Related Work}
In general dimension, the coverability problem (a weak version of the reachability problem) and the boundedness problem (the question whether the reachability set is finite) are decidable for BVASS~\cite{DBLP:journals/dmtcs/VermaG05}, and their precise complexity is known~\cite{DBLP:journals/jcss/DemriJLL13,LazicS15}. The \emph{Strahler number} was introduced by Horton and Strahler in the 1950s in the context of hydrology, and reinvented many times since then, see~\cite{Strahler}. In~\cite{DBLP:conf/fsttcs/AtigG11}, it is shown that the reachability problem for VASS along finite-index context-free languages is decidable by reduction to the reachability problem for VASS with nested zero tests. As observed in~\cite{Strahler}, for a context-free grammar in Chomsky normal form, the index of a derivation tree is equal to its Strahler number minus one.

\smallskip
This paper is an extended version of our FOSSACS'26 paper~\cite{DBLP:conf/fossacs/BiziereLS26} with additional examples, explanations, and detailed proofs. Whereas the notion of trees for defining BVASS runs is based on graphs in the conference paper, in this extended version, trees are defined inductively in a term-like fashion. Thanks to this point of view, the well-quasi-order on BVASS runs can be defined in a simpler way, many proofs can be easily done inductively on the structure of trees, and last but not least, we avoid the burden of defining the complex notion of "hereditarily positive family of contexts" introduced in the original version of the paper to prove amalgamation results. The two notions are slightly different since children are ordered in the term-like definition, but this has no impact for the reachability problem.

\section{Preliminaries} \label{sec:prelim}

\subsection{Generalities}

We let $\Z$, $\N$ and $\Q$ denote the usual sets of integers, natural numbers and rational numbers, respectively. We write $\Q_{>0}$ (resp., $\Q_{\geq 0}$) the set of positive (resp., nonnegative) rational numbers.
For every $a,b \in \N$, we denote by $[a,b] \coloneqq \{ n \in \N \mid a \le n \le b\}$ the integer interval between $a$ and $b$. The cardinality of a set $S$ is written $\card{S}$.

\paragraph{Vectors}
Given a finite set $I$ of indices and a set $\mathbb S \in \{\N,\Z,\Q,\Q_{>0},\Q_{\geq 0}\}$, an \emph{$I$-vector over $\mathbb S$} is a function $I \to \mathbb S$.
Most often, we take $I = [1,d]$, yielding the usual set $\mathbb S^d$ of \emph{$d$-dimensional vectors}.
Vectors are written in boldface and, for $i \in I$, the \emph{component $i$} of the vector $\vec v$ is denoted $\vec v (i)$.
The restriction of an $I$-vector $\vec v$ to a subset $J \subseteq I$ is denoted $\restr{\vec v}{J}$.
The zero $I$-vector is denoted $\vec 0_I$, $\vec 0_d$ when $I = [1,d]$ or $\vec 0$ when $I$ is clear from context.
The product ordering of the usual ordering on $\mathbb{S}$ is denoted $\le$, that is $\vec u \le \vec v$ if and only if for all $i \in I$, $\vec u(i) \le \vec v(i)$.

\paragraph{Semilinear Sets}
A \emph{linear set} of $\N^d$ is a set of the form $\vec{L}=\vec{b}+\vec{P}$ where $\vec{b}\in\N^d$ is called the \emph{basis} and $\vec{P}=\{n_1\vec{v}_1+\cdots+n_k\vec{v}_k\mid n_1,\ldots,n_k\in\N\}$ is called the \emph{periodic set spanned} by the vectors $\vec{v}_1,\ldots,\vec{v}_k\in\N^d$ called the \emph{periods}. The pair $\gamma=(\vec{b},\vec{V})$ is called a \emph{presentation} of the linear set $\vec{L}$, and we denote by $\sem{\gamma}$ the linear set $\vec{L}$ presented by $\gamma$. A \emph{semilinear set} of $\N^d$ is a set of the form $\vec{S}=\vec{L}_1\cup\ldots\cup \vec{L}_k$ where $\vec{L}_j$ is a linear set for every $1\leq j\leq k$. A finite set $\Gamma=\{\gamma_1,\ldots,\gamma_k\}$ where $\gamma_j$ is a presentation of the linear set $\vec{L}_j$ for every $1\leq j\leq k$ is called a \emph{presentation} of $\vec{S}$. In that case, we denote by $\sem{\Gamma}$ the semilinear set $\vec{S}$ presented by $\Gamma$. A semilinear set is said to be \emph{effectively computable} if a presentation of the semilinear set is \emph{computable}.

\paragraph{Rooted labelled trees}
A \emph{(finite rooted directed labelled) tree} $\alpha$ on a set $\Lambda$ is defined inductively as a pair $(\lambda,(\alpha_1,\ldots, \alpha_k))$ where $\lambda\in \Lambda$, $k\in\N$, and $\alpha_1,\ldots,\alpha_k$ is a (possibly empty) finite sequence of trees on $\Lambda$. Such a pair is simply denoted by $\lambda\langle \alpha_1,\ldots,\alpha_k\rangle$. The trees $\alpha_1,\ldots,\alpha_k$ are called the \emph{immediate subtrees} of $\alpha$ and are denoted $\alpha[1],\ldots,\alpha[k]$, while the integer $k$ is called the \emph{arity} of $\alpha$ and is written $\ra{\alpha}$. We denote by $\TT(\Lambda)$ the set of trees on $\Lambda$.

\begin{remark}
  In this paper, operations on trees are formally defined by structural induction. However, we will provide some intuitions using the usual graph theory viewpoint of trees. In particular, we will use the classical concepts of nodes, children, root, leaves, unary nodes and branches of a tree.
\end{remark}

The \emph{size} $\lvert \alpha \rvert$ of a tree $\alpha$ is defined inductively by $\lvert \alpha \rvert \coloneqq 1 + \sum_{i=1}^{\ra{\alpha}} \lvert \alpha[i] \rvert$. Intuitively, $|\alpha|$ is the number of nodes in the tree $\alpha$.
We define the binary relation $\sqsubseteq$ on trees inductively as follows.
A tree $\alpha$ is a \emph{subtree} of a tree $\beta$, denoted $\alpha \sqsubseteq \beta$, if $\alpha = \beta$ or $\alpha \sqsubseteq \beta[i]$ for some $i \in [1,\ra{\beta}]$.
Intuitively, $\alpha\sqsubseteq \beta$ if there exists a node in $\beta$ such that the subtree rooted at this node is equal to $\alpha$.

\paragraph{Branching depth and Strahler number}
We use two measures of the structural complexity of a tree, namely the \emph{branching depth} and the \emph{Strahler number}.  

\smallskip

The \emph{branching depth} $\bd(\alpha)$ of a tree $\alpha$ is defined inductively by $\bd(\alpha) = 0$ if $\ra{\alpha} = 0$, $\bd(\alpha) = \bd(\alpha[1])$ if $\ra{\alpha} = 1$ and $\bd(\alpha) = 1 + \max_{i=1,\ldots,\ra{\alpha}}\bd(\alpha[i])$ if $\ra{\alpha} \ge 2$.
Intuitively, it is the depth of the tree obtained from $\alpha$ by merging every unary node with its child.

\smallskip

The second measure of tree complexity that we use is the \emph{Strahler number}, a less well-known concept with an interesting history.  
It was introduced by Horton and Strahler in the 1950s in hydrology as a tool to quantify the branching complexity of river networks.  
Since then, Strahler numbers have been reinvented many times under different names and applied in a wide range of fields within computer science.  
We refer the reader to the survey~\cite{Strahler} for a detailed account of these developments.  

\smallskip

The Strahler number $\sn(\alpha)$ of a tree $\alpha$ is defined recursively as follows.  
If $\ra{\alpha} = 0$, then $\sn(\alpha) = 0$. 
Otherwise, let  $s_i = \sn(\alpha[i])$ and $s \coloneqq \max\{s_1,\dots,s_{\ra{\alpha}}\}$.  
Then $\alpha$ has Strahler number $s+1$ if $s$ occurs at least twice among the $s_i$, and $s$ otherwise.
An example of a tree where nodes are labelled by the Strahler number of the subtree rooted at this node is given on the left of \cref{fig:Strahler-bd}.

\begin{figure}
  \centering
  \begin{forest}
    [2
    [1 [0] [1 [0] [0] [0]]]
    [1 [0] [0 [0]]]    ]
  \end{forest}
  \hfill
  \begin{tikzpicture}[
    >=Stealth,
    node distance=2.5cm,
    every state/.style={minimum size=8mm},
    auto
  ]

  \node[state] (p) {$p$};
  \node[state,right=of p] (q) {$q$};
  \node[below=0.8cm of p] (start) {};

  \path[->] (p) edge[bend left] node[above] {$(1,0,0)$} (q);
  \path[->] (q) edge node[above] {} (p);

  \path[->,red] (p) edge[bend right=30] coordinate[pos=0.8] (pq1) (q);
  \path[->,red] (p) edge[bend right=50] coordinate[pos=0.8] (pq2) (q);
  \path[-] (pq2) edge[red, bend left] node[red, left, yshift=-0.1cm] {$+$} (pq1);

  \path[->] (p) edge[loop above] node[above] {$(0,1,-1)$} (p);
  \path[->] (q) edge[loop above] node[above] {$(0,-1,2)$} (q);

  \path[->] (start) edge node[left] {$(0,1,0)$} (p);

\end{tikzpicture}
\hfill
\begin{tikzpicture}[%
  tnode/.style = {circle, draw=black, fill=black, inner sep=0.4ex}
]
  \node (n1) [tnode, label={[yshift=2ex]center:{$q(0,1,2)$}}] {};
  \node (n2) [tnode, label={[xshift=6ex]center:{$q(0,2,0)$}}] at ([yshift=-14mm]n1) {};
  \node (n3) [tnode, label={[yshift=-2ex]center:{$p(0,1,0)$}}] at ([xshift=-10mm, yshift=-10mm]n2) {};
  \node (n4) [tnode, label={[yshift=-2ex]center:{$p(0,1,0)$}}] at ([xshift=10mm, yshift=-10mm]n2) {};

  \path[-]
    (n1) edge (n2)
    (n2) edge (n3)
    (n2) edge (n4)
  ;
\end{tikzpicture}

\caption{On the left: a tree where nodes are labelled by their Strahler number. %
On the middle: a $3$-BVASS. On the right: a run of that BVASS.}
\label{fig:Strahler-bd}
\end{figure}

\subsection{Wqo} \label{subsec:wqo} %

This article relies on results about well quasi-orders, but requires no prior familiarity with the topic. We therefore keep the presentation minimal and refer the reader to~\cite{polyWQO} for more background and for complete proofs.  

\smallskip

A \emph{quasi-order} is a pair $(X,\leq)$ where $X$ is a set and $\leq$ is a binary relation on $X$ that is reflexive and transitive.
The \emph{$\leq$-upward-closure} ${\uparrow_{\leq}}(Y)$ of a subset $Y \subseteq X$ is the set $\{x \in X \mid \exists y \in Y, y \leq x\}$.
We say that $Y$ is \emph{$\leq$-upward-closed} when $Y = {\uparrow_{\leq}}(Y)$.
A \emph{$\leq$-basis} of a $\leq$-upward-closed subset $Y \subseteq X$ is a set $B \subseteq Y$ such that
$Y = {\uparrow_{\leq}}(B)$.

\smallskip

A quasi-order $(X,\leq)$ is a \emph{well quasi-order} (wqo) if for every infinite sequence $(x_i)_{i \in \mathbb{N}}$ with $x_i \in X$, there exist indices $i<j$ such that $x_i \leq x_j$. Equivalently, $(X,\leq)$ is a wqo if every $\leq$-upward-closed subset $Y \subseteq X$ admits a finite $\leq$-basis.

\begin{example} \label{ex:wqo}
  $(\mathbb{N},\leq)$ is a wqo. Every finite quasi-order is a wqo (in particular, $(Q,=)$ where $Q$ is a finite set).
\end{example}  

\smallskip

Many natural constructions preserve wqos. We recall below the ones used in this paper. We assume that 
$(X,\leq_X)$ and $(Y, \leq_Y)$ are quasi-orders.

\smallskip

\emph{Order-reflecting functions.}  
If $f:(X,\leq_X)\to(Y,\leq_Y)$ satisfies $f(x) \leq_Y f(x') \implies x \leq_X x'$, and $(Y,\leq_Y)$ is a wqo then $(X, \leq_X)$ is also a wqo. %

\smallskip

\emph{Monotonic functions.}  
If a surjective function $f:(X,\leq_X)\to(Y,\leq_Y)$ satisfies $x \leq_X x' \implies f(x) \leq_Y f(x')$, and $(X,\leq_X)$ is a wqo then $(Y, \leq_Y)$ is also a wqo. %

\smallskip

\emph{Cartesian products.}  
If $(X,\leq_X)$ and $(Y,\leq_Y)$ are wqo, then so is their product $(X \times Y, \leq_{X \times Y})$, where $(x,y) \leq_{X \times Y} (x',y')$ iff $x \leq_X x'$ and $y \leq_Y y'$.  

\smallskip

\emph{Words.}  
If $(X,\leq)$ is a wqo, then so is the set $X^*$ of finite sequences (or words) ordered by the \emph{subsequence embedding}:  
$x_1 \cdots x_m \leq_{X^*} y_1 \cdots y_n$ if there exist indices $1 \leq i_1 < \cdots < i_m \leq n$ such that $x_j \leq y_{i_j}$ for all $j$.  

\smallskip

\emph{Trees.} 
If $(X,\leq)$ is a wqo, then so is the set $\TT(X)$ of finite rooted directed trees labelled by $X$, ordered by \emph{homeomorphic embedding}. The homeomorphic embedding is defined inductively on the structure of a tree by $\alpha \preceq_{\TT} \beta$ where $\alpha\coloneqq x\langle \alpha_1,\ldots,\alpha_m\rangle$ and $\beta\coloneqq y\langle \beta_1,\ldots,\beta_{n}\rangle$ if (1) there exists $i\in [1,n]$ such that $\alpha\preceq_{\TT} \beta_i$, or (2) $x \leq y$ and there exists a sequence $1\leq i_1<\ldots<i_m\leq n$ such that $\alpha_j\preceq_{\TT} \beta_{i_j}$ for all $j$.

\subsection{VASS and extensions}

\paragraph{VASS}
A \emph{$d$-dimensional VASS} ($d$-VASS) is a triple $\VV \coloneqq (Q, \Delta, \vec S)$ where $Q$ is a nonempty finite set of \emph{states}, $\Delta \subseteq Q \times \Z^d \times Q$ is a finite set of \emph{transitions}, and $\vec S \subseteq Q \times \N^d$ is a finite set of initial configurations. The \emph{configurations} of $\VV$ are pairs of the form $\vec c \coloneqq (q, \vec x)$, usually written $q(\vec x)$, with $q \in Q$ and $\vec x \in \N^d$. We extend the product ordering of $\N^d$ to configurations by writing $p(\vec x) \le q(\vec y)$ when $p = q$ and $\vec x \le \vec y$.
Intuitively, a VASS is an automaton equipped with counters which can be incremented and decremented. Counters must always remain nonnegative, and any transition that would make a counter negative is disabled.

\smallskip

A \emph{run} of $\VV$ is a finite sequence $\rho \coloneqq q_1(\vec x_1) q_2(\vec x_2) \cdots q_k(\vec x_k)$ of configurations such that the first configuration $q_1(\vec x_1)$, called \emph{source} and denoted $\source(\rho)$, belongs to $\vec S$, and every other configuration is obtained from the previous one by applying a transition, \emph{i.e.}, $(q_i, \vec x_{i+1} - \vec x_i, q_{i+1}) \in \Delta$ for each $i \in [1,k-1]$. The last configuration $q_k(\vec x_k)$ is called the \emph{target} and denoted $\target(\rho)$.
The set of runs of $\VV$ is denoted $\Runs(\VV)$.
The \emph{reachability set} $\Reach(\VV)$ is the set of targets of runs of $\VV$, \emph{i.e.}, $\Reach(\VV) \coloneqq \{ \target(\rho) \mid \rho \in \Runs(\VV) \}$.
The \emph{sections} of $\VV$ are the sets obtained from $\Reach(\VV)$ by intersecting with a semilinear set and projecting. In this article, we only consider sections of the form $\Sect_I^{Q'} \coloneqq \{ q(\restr{\vec x}{I}) \mid q \in Q', \restr{\vec x}{[1,d] \setminus I} = \vec 0, q(\vec x) \in \Reach(\VV) \}$ where $Q' \subseteq Q$ and $I \subseteq [1,d]$. 
The \emph{reachability problem} asks, given a VASS $\VV$ and a configuration $\vec c$, whether $\vec c \in \Reach(\VV)$.

\paragraph{BVASS}
A \emph{$d$-dimensional branching VASS} ($d$-BVASS) is a pair $\BB \coloneqq (Q, \Delta)$ where $Q$ is a finite set of \emph{states} and $\Delta \subseteq Q^* \times \Z^d \times Q$ is a finite set of \emph{transition rules}. The \emph{arity} $\arity(\delta)$ of a transition rule $\delta \coloneqq (p_1\cdots p_k, \vec a, q)$ is $k$. We say that $\delta$ is \emph{unary} if $\arity(\delta) = 1$ and that $\delta$ is \emph{branching} if $\arity(\delta) \geq 2$. As for VASS, we call \emph{configurations} of $\VV$ the pairs $(q, \vec x) \in Q \times \N^d$, and we write them $\vec c \coloneqq q(\vec x)$. Configurations $q(\vec a)$ where $(\varepsilon, \vec a, q) \in \Delta$ are called \emph{initial configurations}.

\smallskip
\emph{Runs} of $\BB$ are trees with labels in $Q \times \N^d$ satisfying a certain property defined inductively. A tree $\alpha \coloneqq q(\vec x) \langle \alpha_1,\ldots,\alpha_k \rangle$ is a run with \emph{target} $\target(\alpha) \coloneqq q(\vec x)$ if all the trees $\alpha_i$ with $i \in [1,k]$ are runs whose targets $p_i(\vec x_i)$ satisfy $(p_1 \cdots p_k, \vec x - \sum_{i=1}^k \vec x_i, q) \in \Delta$. 
In other words, every node is obtained from its children by applying some transition rule in $\Delta$. We also define $\Rule(\alpha) \coloneqq (p_1 \cdots p_k, \vec x - \sum_{i=1}^k \vec x_i, q)$ and $\counters(\alpha) \coloneqq \vec x$.
Reachability sets and sections are defined and denoted exactly as in the case of VASS. The BVASS reachability problem is also defined similarly. We also use the notation $\Runs(\BB)$ for the set of runs of $\BB$.

\begin{example}
  A $3$-BVASS is depicted in \cref{fig:Strahler-bd}. With the graph theory viewpoint, the run $q(0,1,2)\langle q(0,2,0)\langle p(0,1,0)\langle\rangle p(0,1,0)\langle\rangle \rangle  \rangle$ of that BVASS is also depicted in that figure. Without the branching transition rule depicted by two red arrows, we would have a VASS.
\end{example}

\smallskip

A set of configurations $\vec I\subseteq Q\times \N^d$ is called an \emph{inductive invariant} for a BVASS $\BB=(Q,\Delta)$ if for every sequence $p_1(\vec{x}_1),\ldots,p_k(\vec{x}_k)\in \vec I$ and every configuration $q(\vec{x})$ such that $(p_1 \cdots p_k,\vec{x}-\sum_{i=1}^k\vec{x}_i,q)\in \Delta$, we have $q(\vec{x})\in \vec I$. Observe that the reachability set of $\BB$ is an inductive invariant and is contained in every inductive invariant. In particular a configuration is not reachable for a BVASS, if and only if, there exists an inductive invariant that does not contain that configuration.

\section{A wqo on BVASS runs} \label{sec:wqo-on-runs}

Several results related to the VASS reachability problem have been derived from a geometrical decomposition of the VASS reachability sets into finite unions of almost linear sets~\cite{DBLP:conf/concur/GuttenbergRE23,DBLP:conf/lics/Leroux13}. This decomposition was obtained in~\cite{Turing-100:Vector_Addition_Systems_Reachability} thanks to a wqo on VASS runs that satisfies an amalgamation property. This wqo was first introduced by Jančar in~\cite{JANCARwqo} and independently rediscovered later~\cite{Turing-100:Vector_Addition_Systems_Reachability}. Intuitively, a VASS run $\sigma$ is smaller\footnote{%
  Our informal presentation slightly differs from~\cite{JANCARwqo,Turing-100:Vector_Addition_Systems_Reachability} as they allow comparable runs to have distinct (but comparable) source configurations.
  In our setting, runs always start from an initial configuration,
  of which there are only finitely many,
  so we require the source configurations to be the same for simplicity.
} than a VASS run $\tau$ if $\tau$ is obtained from $\sigma$ by inserting a sequence of cycles $\sigma_1,\ldots,\sigma_k$ that is hereditarily positive, \emph{i.e.}, such that the total effect of $\sigma_1,\ldots,\sigma_i$ is non-negative (on all components) for every $i\in[1,k]$.

\smallskip

In this section we generalise this wqo to BVASS runs and prove an amalgamation property. Our wqo is defined inductively on the tree structure of BVASS runs and, at first sight, seems to differ from the previously mentioned wqo for the special case of VASS. However, as discussed in \cref{rem:wqo} below, our wqo can be informally explained with the notion of contexts, a natural generalisation of VASS cycles.

\begin{figure}[t]
  \begin{tikzpicture}[%
  tnode/.style = {circle, draw=black, fill=black, inner sep=0.4ex}
]
  \node (n1) [tnode, label={[yshift=2.5ex]center:{$q_1(\vec{x}_1)$}}] {};
  \node (n2) [tnode, label={[xshift=-3.75ex,yshift=1.75ex]center:{$q_2(\vec{x}_2)$}}] at ([yshift=-14mm]n1) {};
  \node (n3) [tnode, label={[yshift=-2.5ex]center:{$q_3(\vec{x}_3)$}}] at ([xshift=-10mm, yshift=-10mm]n2) {};
  \node (n4) [tnode, label={[yshift=-2.5ex]center:{$q_4(\vec{x}_4)$}}] at ([xshift=10mm, yshift=-10mm]n2) {};

  \path[-]
    (n1) edge (n2)
    (n2) edge (n3)
    (n2) edge (n4)
  ;
\end{tikzpicture}
\hfill
\begin{tikzpicture}[%
  tnode/.style = {circle, draw=black, fill=black, inner sep=0.4ex},
  cnode/.style = {circle, draw=blue, fill=white, inner sep=0.4ex},
  ctx/.style = {isosceles triangle, shape border rotate=90, draw=blue, isosceles triangle stretches, minimum width=1.5cm, minimum height=0.9cm}
]

  \node (gamma1) [ctx] {};
  \node (m1) [cnode, label={[yshift=2.5ex, blue]center:{$q_1(\vec{z}_1)$}}] at (gamma1.north) {};
  \node (n1) [tnode, label={[xshift=-3.75ex,yshift=-1.5ex]center:{$q_1(\vec{y}_1)$}}] at (gamma1.south) {};

  \node (gamma2) [ctx, anchor=north] at ([yshift=-7mm]n1) {};
  \node (m2) [cnode, label={[xshift=4.25ex, blue]center:{$q_2(\vec{z}_2)$}}] at (gamma2.north) {};
  \node (n2) [tnode, label={[yshift=-2.5ex]center:{$q_2(\vec{y}_2)$}}] at (gamma2.south) {};

  \node (gamma3) [ctx, anchor=north] at ([xshift=-15mm, yshift=-7mm]n2) {};
  \node (m3) [cnode, label={[xshift=-4.25ex, blue]center:{$q_3(\vec{z}_3)$}}] at (gamma3.north) {};
  \node (n3) [tnode, label={[yshift=-2.5ex]center:{$q_3(\vec{y}_3)$}}] at (gamma3.south) {};

  \node (gamma4) [ctx, anchor=north] at ([xshift=15mm, yshift=-7mm]n2) {};
  \node (m4) [cnode, label={[xshift=4.25ex, blue]center:{$q_4(\vec{z}_4)$}}] at (gamma4.north) {};
  \node (n4) [tnode, label={[yshift=-2.5ex]center:{$q_4(\vec{y}_4)$}}] at (gamma4.south) {};

  \path[-]
    (n1) edge (m2)
    (n2) edge (m3)
    (n2) edge (m4)
  ;
\end{tikzpicture}
\hfill
\begin{tikzpicture}[%
  run/.style = {isosceles triangle, shape border rotate=90, draw=black, isosceles triangle stretches, minimum width=2.5cm, minimum height=1.5cm},
  decoration = {snake, segment length=1mm, amplitude=0.3mm}
]

  \node (alpha)      [run, label={[yshift=-1mm]below:$\alpha$}, anchor=north, fill=blue!50, postaction={pattern=north west lines, pattern color=blue}] {};
  \node (alphasigma) [run, fill=white, label={[yshift=0.25ex]center:$\sigma$}, minimum width=1cm, minimum height=6mm] at (alpha) {};
\end{tikzpicture}
  \caption{%
    Illustration of the quasi-order $\trianglelefteq$ on runs of a BVASS.
    The left part depicts a run $\sigma$ and the middle part depicts a run $\alpha$ such that $\sigma \trianglelefteq \alpha$.
    The two following conditions are met to guarantee that $\sigma \trianglelefteq \alpha$.
    Firstly,
    $\vec{x}_i \leq \vec{y}_i, \vec{z}_i$ for $i = 1, \ldots, 4$.
    This condition ensures that the ordering constraints on the targets holds.
    Secondly,
    $\vec{x}_1 - \vec{x}_2 = \vec{y}_1 - \vec{z}_2$,
    $\vec{x}_2 - (\vec{x}_3 + \vec{x}_4) = \vec{y}_2 - (\vec{z}_3 + \vec{z}_4)$,
    $\vec{x}_3 = \vec{y}_3$ and $\vec{x}_4 = \vec{y}_4$.
    This condition ensures that the equality constraints on the rules holds.
    The right part provides an abstract representation of the middle part.
  }
  \label{fig:wqo-on-runs}
\end{figure}

\begin{figure}[t]
  \begin{tikzpicture}[%
  tnode/.style = {circle, draw=black, fill=black, inner sep=0.4ex},
  cnode/.style = {circle, draw=blue, fill=white, inner sep=0.4ex},
  ctx/.style = {isosceles triangle, shape border rotate=90, draw=blue, isosceles triangle stretches, minimum width=1.5cm, minimum height=0.9cm}
]

  \node (gamma1) [ctx] {};
  \node (m1) [cnode] at (gamma1.north) {};
  \node (n1) [tnode] at (gamma1.south) {};

  \node (gamma2) [ctx, anchor=north] at ([yshift=-7mm]n1) {};
  \node (m2) [cnode] at (gamma2.north) {};
  \node (n2) [tnode] at (gamma2.south) {};

  \node (gamma3) [ctx, anchor=north] at ([xshift=-10mm, yshift=-7mm]n2) {};
  \node (m3) [cnode] at (gamma3.north) {};
  \node (n3) [tnode] at (gamma3.south) {};

  \node (gamma4) [ctx, anchor=north] at ([xshift=10mm, yshift=-7mm]n2) {};
  \node (m4) [cnode] at (gamma4.north) {};
  \node (n4) [tnode] at (gamma4.south) {};

  \path[-]
    (n1) edge (m2)
    (n2) edge (m3)
    (n2) edge (m4)
  ;
\end{tikzpicture}
\hfill
\begin{tikzpicture}[%
  tnode/.style = {circle, draw=black, fill=black, inner sep=0.4ex},
  cnode/.style = {circle, draw=green!50!black, fill=white, inner sep=0.4ex},
  ctx/.style = {isosceles triangle, shape border rotate=90, draw=green!50!black, isosceles triangle stretches, minimum width=1.5cm, minimum height=0.9cm}
]

  \node (gamma1) [ctx] {};
  \node (m1) [cnode] at (gamma1.north) {};
  \node (n1) [tnode] at (gamma1.south) {};

  \node (gamma2) [ctx, anchor=north] at ([yshift=-7mm]n1) {};
  \node (m2) [cnode] at (gamma2.north) {};
  \node (n2) [tnode] at (gamma2.south) {};

  \node (gamma3) [ctx, anchor=north] at ([xshift=-10mm, yshift=-7mm]n2) {};
  \node (m3) [cnode] at (gamma3.north) {};
  \node (n3) [tnode] at (gamma3.south) {};

  \node (gamma4) [ctx, anchor=north] at ([xshift=10mm, yshift=-7mm]n2) {};
  \node (m4) [cnode] at (gamma4.north) {};
  \node (n4) [tnode] at (gamma4.south) {};

  \path[-]
    (n1) edge (m2)
    (n2) edge (m3)
    (n2) edge (m4)
  ;
\end{tikzpicture}
\hfill
\begin{tikzpicture}[%
  tnode/.style = {circle, draw=black, fill=black, inner sep=0.4ex},
  cnodecommon/.style = {circle, fill=white, inner sep=0.4ex},
  cnode/.style = {cnodecommon, draw=blue},
  cnodep/.style = {cnodecommon, draw=green!50!black},
  ctxcommon/.style = {isosceles triangle, shape border rotate=90, isosceles triangle stretches, minimum width=1.5cm, minimum height=0cm},
  ctx/.style = {ctxcommon, draw=blue},
  ctxp/.style = {ctxcommon, draw=green!50!black}
]

  \node (gamma1)  [ctx] {};
  \node (gamma1p) [ctxp, anchor=north] at (gamma1.south) {};
  \node (m1)  [cnode] at (gamma1.north) {};
  \node (m1p) [cnodep] at (gamma1p.north) {};
  \node (n1)  [tnode] at (gamma1p.south) {};

  \node (gamma2)  [ctx, anchor=north] at ([yshift=-7mm]n1) {};
  \node (gamma2p) [ctxp, anchor=north] at (gamma2.south) {};
  \node (m2)  [cnode] at (gamma2.north) {};
  \node (m2p) [cnodep] at (gamma2p.north) {};
  \node (n2)  [tnode] at (gamma2p.south) {};

  \node (gamma3)  [ctx, anchor=north] at ([xshift=-10mm, yshift=-7mm]n2) {};
  \node (gamma3p) [ctxp, anchor=north] at (gamma3.south) {};
  \node (m3)  [cnode] at (gamma3.north) {};
  \node (m3p) [cnodep] at (gamma3p.north) {};
  \node (n3)  [tnode] at (gamma3p.south) {};

  \node (gamma4)  [ctx, anchor=north] at ([xshift=10mm, yshift=-7mm]n2) {};
  \node (gamma4p) [ctxp, anchor=north] at (gamma4.south) {};
  \node (m4)  [cnode] at (gamma4.north) {};
  \node (m4p) [cnodep] at (gamma4p.north) {};
  \node (n4)  [tnode] at (gamma4p.south) {};

  \path[-]
    (n1) edge (m2)
    (n2) edge (m3)
    (n2) edge (m4)
  ;
\end{tikzpicture}

\bigskip
\bigskip

\hspace{5mm}%
\begin{tikzpicture}[%
  run/.style = {isosceles triangle, shape border rotate=90, draw=black, isosceles triangle stretches, minimum width=2.5cm, minimum height=1.5cm}
]

  \node (alpha)      [run, label={[yshift=-1mm]below:$\alpha$}, fill=blue!50, postaction={pattern=north west lines, pattern color=blue}] {};
  \node (alphasigma) [run, fill=white, label={[yshift=0.25ex]center:$\sigma$}, minimum width=1cm, minimum height=6mm] at (alpha) {};
\end{tikzpicture}
\hfill
\hspace{2mm}%
\begin{tikzpicture}[%
  run/.style = {isosceles triangle, shape border rotate=90, draw=black, isosceles triangle stretches, minimum width=2.5cm, minimum height=1.5cm}
]

  \node (beta)      [run, label={[yshift=-1mm]below:$\beta$}, fill=green!50!black!50, postaction={pattern=north east lines, pattern color=green!50!black}] {};
  \node (betasigma) [run, fill=white, label={[yshift=0.25ex]center:$\sigma$}, minimum width=1cm, minimum height=6mm] at (beta) {};
\end{tikzpicture}
\hfill
\begin{tikzpicture}[%
  run/.style = {isosceles triangle, shape border rotate=90, draw=black, isosceles triangle stretches, minimum width=2.5cm, minimum height=1.5cm}
]

  \node (gammap)        [run, label={[yshift=-1mm]below:$\gamma$}, fill=blue!50, postaction={pattern=north west lines, pattern color=blue}, minimum width=3cm, minimum height=1.8cm] {};
  \node (gamma)         [run, fill=green!50!black!50, postaction={pattern=north east lines, pattern color=green!50!black}, minimum width=2cm, minimum height=1.2cm] at (gammap) {};
  \node (gammasigma)    [run, fill=white, label={[yshift=0.25ex]center:$\sigma$}, minimum width=1cm, minimum height=6mm] at (gamma) {};

\end{tikzpicture}%
\hspace{1mm}
  \caption{%
    Illustration of the amalgamation property (\cref{lem:ibvas-amalgamation}).
    The left and middle parts respectively depict two runs $\alpha$ and $\beta$ that are above the same run $\sigma$,
    \emph{i.e.},
    such that $\sigma \trianglelefteq \alpha, \beta$.
    The right part depicts the run $\gamma$ resulting from the amalgamation of $\alpha$ and $\beta$.
    The three runs $\alpha$, $\beta$ and $\gamma$ are depicted concretely at the top and abstractly at the bottom.
  }
  \label{fig:amalgamation-of-runs}
\end{figure}

We define the quasi-order $\trianglelefteq$ on runs of a BVASS $\BB$ inductively by $\alpha \trianglelefteq \beta$ if there exists $\beta' \sqsubseteq \beta$ such that
$\tgt{\alpha} \leq \tgt{\beta}, \tgt{\beta'}$,
$\Rule(\alpha) = \Rule(\beta')$ and
$\alpha[i] \trianglelefteq \beta'[i]$ for every $i \in [1, \ra{\alpha}]$. (Note that the condition $\Rule(\alpha) = \Rule(\beta')$ implies that $\ra{\alpha} = \ra{\beta}$.)
It is readily seen that $\trianglelefteq$ is in fact a partial-order (\emph{i.e.}, an antisymmetric quasi-order) on runs of $\BB$.
An illustration of the quasi-order $\trianglelefteq$ is provided in \cref{fig:wqo-on-runs}.

\begin{remark}\label{rem:wqo}
  The quasi-order $\trianglelefteq$ can be defined equivalently via the insertion of contexts which are a branching generalisation of VASS cycles, see~\cite{DBLP:conf/fossacs/BiziereLS26} for such a definition. In \cref{fig:wqo-on-runs}, we present on the left and on the middle two BVASS runs $\sigma$ and $\alpha$ satisfying $\sigma \trianglelefteq \alpha$. Intuitively, $\alpha$ is obtained from $\sigma$ by inserting a family of contexts (depicted as blue triangles) that is hereditarily positive, \emph{i.e.}, such that the total effect of contexts below each node is non-negative (on all components).
\end{remark}

\begin{lemma}
  \label{lem:ibvas-run-order-wpo}
  The pair $(\Runs(\BB), \trianglelefteq)$ is a wqo.
\end{lemma}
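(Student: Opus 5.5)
The plan is to derive the statement from Kruskal's tree theorem (the \emph{Trees} item of \cref{subsec:wqo}), using an order-reflecting map as in \emph{Order-reflecting functions}. Kruskal's embedding cannot encode $\trianglelefteq$ directly, for two reasons. First, it allows children to be matched to a subsequence. Second, when it descends, it loses track of the exact node whose target must dominate. The comparison $\alpha[i]\trianglelefteq\beta'[i]$ needs $\tgt{\alpha[i]}\le\tgt{\beta'[i]}$ for the \emph{exact} child $\beta'[i]$, whereas a descent in the embedding may land strictly below $\beta'[i]$. I will handle this by storing each node's children targets in its own label, and by handling the root target separately.

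Let $K$ be the maximal arity of a rule in $\Delta$. Consider the label set $X$ of tuples $(\delta,\vec c,\vec c_1,\ldots,\vec c_k)$ with $\delta\in\Delta$, $k=\arity(\delta)$, and $\vec c,\vec c_j\in Q\times\N^d$. Two labels are ordered when their rules are equal and their configurations are componentwise $\le$. Then $X$ is a finite disjoint union, over $\delta\in\Delta$, of finite products of copies of $Q\times\N^d$. Each of these is a wqo by \cref{ex:wqo} and the \emph{Cartesian products} item, and a finite union of wqos with incomparable parts is again a wqo. To each run $\beta$ associate the tree $h(\beta)\in\TT(X)$, defined by induction as
\[
h(\beta)\coloneqq(\Rule(\beta),\tgt{\beta},\tgt{\beta[1]},\ldots,\tgt{\beta[k]})\langle h(\beta[1]),\ldots,h(\beta[k])\rangle ,
\]
where $k=\ra{\beta}$. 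Then define $f(\beta)\coloneqq(\tgt{\beta},h(\beta))\in(Q\times\N^d)\times\TT(X)$. The codomain is a wqo by Kruskal and the product construction.

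The key step is the following claim, proved by induction on $|\alpha|$: if $h(\alpha)\preceq_{\TT}h(\beta)$ and $\tgt{\alpha}\le\tgt{\beta}$, then $\alpha\trianglelefteq\beta$. Given the claim, $f$ is order-reflecting, and the lemma follows. To prove the claim, unfold the homeomorphic embedding. Applying case (1) repeatedly, we obtain some $\beta'\sqsubseteq\beta$ at which case (2) holds. Since the labels are comparable, $\Rule(\alpha)=\Rule(\beta')$, so $\ra{\alpha}=\ra{\beta'}$. The increasing index sequence $1\le i_1<\dots<i_m\le m$ is then forced to be the identity, which gives $h(\alpha[i])\preceq_{\TT}h(\beta'[i])$ for every $i$. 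The label comparison yields $\tgt{\alpha}\le\tgt{\beta'}$ and, crucially, $\tgt{\alpha[i]}\le\tgt{\beta'[i]}$. The induction hypothesis applied to each pair $(\alpha[i],\beta'[i])$ then gives $\alpha[i]\trianglelefteq\beta'[i]$. Together with $\tgt{\alpha}\le\tgt{\beta}$, which is the extra hypothesis, all conditions in the definition of $\alpha\trianglelefteq\beta$ hold.

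The main obstacle is exactly the design of the labels. Without storing the children targets in the parent label, the induction breaks down: the embedding may descend inside $\beta'[i]$ and only certify domination by a descendant's target. Likewise, without pairing with the root target in $f$, we would lack $\tgt{\alpha}\le\tgt{\beta}$ at the top level. Once these two choices are made, the remaining verifications (that $X$ is a wqo, and that equal rules force equal arities and hence an identity matching) are routine.
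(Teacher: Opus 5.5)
Your proposal is correct and follows essentially the paper's proof. The paper also decorates each node with its target, its rule, and the targets of its children, applies Kruskal's theorem, and adds the root comparison $\tgt{\alpha}\le\tgt{\beta}$ separately. The only differences are cosmetic. You use fixed-arity tuples where the paper uses words under subsequence embedding; these coincide once the rules are required to be equal. You also prove only the order-reflecting direction, whereas the paper states an equivalence.
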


\begin{proof}
  To prove \cref{lem:ibvas-run-order-wpo}, we recast $(\Runs(\BB), \trianglelefteq)$ as a tree homeomorphic embedding on decorated runs. Our enriched labels will be taken from the set $\Lambda \coloneqq (Q \times \N^d) \times \Delta \times  (Q \times \N^d)^*$.
  We equip $\Lambda$ with the product order $\leq_{\Lambda} {\coloneqq} \leq_{Q \times \N^d} \times =_{\Delta} \times \leq_{(Q \times \N^d)^*}$, where $\leq_{Q \times \N^d}$ is the product order $=_Q \times \leq_{\N^d}$.
  By Cartesian product and subsequence embedding (see \cref{subsec:wqo}),
  both $(Q \times \N^d, \leq_{Q \times \N^d})$ and $(\Lambda, \leq_{\Lambda})$ are wqos.

  We introduce the (decoration) function $f:\Runs(\BB) \to \TT(\Lambda)$ defined inductively by $f(\alpha)\coloneqq \lambda \langle f(\alpha[1]),\ldots,f(\alpha[k]) \rangle$ with $k \coloneqq \ra{\alpha}$ and $\lambda \coloneqq (\tgt{\alpha}, \Rule(\alpha) , \tgt{\alpha_1}\cdots\tgt{\alpha_k})$.
  By structural induction on $\alpha$, notice that $\alpha \trianglelefteq \beta$ if, and only if, $f(\alpha) \preceq_{\TT(\Lambda)} f(\beta)\wedge \tgt{\alpha}\leq \tgt{\beta}$. It follows that $\trianglelefteq$ is a wqo on $\Runs(\BB)$.
\end{proof}

We will now prove the following \emph{amalgamation} property. For this, we first introduce a substitution operation, which will also help us in the proof of \cref{lem:decompo} in the next section.
The \emph{substitution} in $\alpha$ of the subrun $\alpha' \sqsubseteq \alpha$ by an other run $\beta$, denoted $\alpha[\alpha' \gets \beta]$, is defined only if $\vec{v}\coloneqq \tgt{\alpha'}-\tgt{\beta}\geq \vec{0}$. In this case, the definition is given inductively by $\alpha[\alpha' \gets \beta] = \beta$ if $\alpha = \alpha'$ and else by 
\[q(\vec{x}+\vec{v}) \Bigl \langle \alpha[1],\ldots,\alpha[i-1], \ (\alpha[i]) [\alpha' \gets \beta],  \ \alpha[i+1], \ldots,  \alpha[\ra{\alpha}] \Bigr \rangle\]
where $q(\vec{x})\coloneqq \tgt{\alpha}$, and $i$ is the smallest index such that $\alpha' \sqsubseteq \alpha[i]$.
One can easily show by induction that $\tgt{\alpha[\alpha' \gets \beta]} \ge \tgt{\alpha}$ and $|\alpha[\alpha' \gets \beta]|=|\alpha|-|\alpha'|+|\beta|$.
The above-mentioned amalgamation property is expressed in the following lemma.
An illustration is provided in \cref{fig:amalgamation-of-runs}.

\begin{lemma}
  \label{lem:ibvas-amalgamation}
  For all runs $\sigma, \alpha, \beta$ such that $\sigma \trianglelefteq \alpha, \beta$,
  there exists a run $\gamma$ such that
  $\alpha, \beta \trianglelefteq \gamma$,
  $\lvert \sigma \rvert + \lvert \gamma \rvert = \lvert \alpha \rvert + \lvert \beta \rvert$ and
  $\counters(\sigma) + \counters(\gamma) = \counters(\alpha) + \counters(\beta)$.
\end{lemma}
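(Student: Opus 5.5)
Proof plan. The approach is structural induction on $\sigma$, using the inductive definition of $\trianglelefteq$. The idea is to graft the "extra part" of $\beta$ over $\sigma$ into $\alpha$, at the matching place.

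\textbf{Setting up the induction.} Write $\sigma\trianglelefteq\alpha$ via some witness $\alpha'\sqsubseteq\alpha$ and $\sigma\trianglelefteq\beta$ via some witness $\beta'\sqsubseteq\beta$. Thus $\Rule(\sigma)=\Rule(\alpha')=\Rule(\beta')$, and $\sigma[i]\trianglelefteq\alpha'[i],\beta'[i]$ for all $i\le k\coloneqq\ra{\sigma}$. By the induction hypothesis, for each $i$ we obtain a run $\gamma_i$ with
\begin{itemize}
\item $\alpha'[i],\beta'[i]\trianglelefteq\gamma_i$,
\item $\lvert\sigma[i]\rvert+\lvert\gamma_i\rvert=\lvert\alpha'[i]\rvert+\lvert\beta'[i]\rvert$,
\item $\counters(\sigma[i])+\counters(\gamma_i)=\counters(\alpha'[i])+\counters(\beta'[i])$.
\end{itemize}
Let $\vec s,\vec a',\vec b'$ be the counters at the roots of $\sigma,\alpha',\beta'$. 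Define
\[
\gamma'\coloneqq q(\vec a'+\vec b'-\vec s)\langle\gamma_1,\ldots,\gamma_k\rangle .
\]
This is a run, because the rule is preserved: summing the children's identities gives $\sum_i\counters(\gamma_i)=\sum_i(\counters(\alpha'[i])+\counters(\beta'[i])-\counters(\sigma[i]))$. Moreover $\vec a'+\vec b'-\vec s\ge\vec a',\vec b'\ge\vec 0$.

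\textbf{Building $\gamma$.} Splice $\gamma'$ into both contexts. First set $\alpha''\coloneqq\alpha[\alpha'\gets\gamma']$; this is not literally defined, since the substitution requires a nonnegative difference in the other direction. So I would generalise the substitution to allow $\tgt{\beta}\ge\tgt{\alpha'}$ with difference $\vec v\coloneqq\tgt{\gamma'}-\tgt{\alpha'}=\vec b'-\vec s\ge\vec 0$, adding $\vec v$ along the path to the root. This keeps rules intact and stays nonnegative. Then $\alpha''$ has root counter $\counters(\alpha)+\vec b'-\vec s$. Next, to also include $\beta$'s upper context, substitute $\beta'$ in $\beta$ by $\alpha''$, with difference $\tgt{\alpha''}-\tgt{\beta'}=\counters(\alpha)-\vec s\ge\vec 0$ (using $\tgt\sigma\le\tgt\alpha$). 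This gives
\[
\gamma\coloneqq\beta[\beta'\gets\alpha''],\qquad \counters(\gamma)=\counters(\beta)+\counters(\alpha)-\vec s .
\]
The size identity follows from the substitution size formula (upward version):
\[
\lvert\gamma\rvert=\lvert\beta\rvert-\lvert\beta'\rvert+\lvert\alpha\rvert-\lvert\alpha'\rvert+\lvert\gamma'\rvert,
\]
and $\lvert\gamma'\rvert=1+\sum\lvert\gamma_i\rvert$ together with the induction hypothesis gives $\lvert\gamma'\rvert=\lvert\alpha'\rvert+\lvert\beta'\rvert-\lvert\sigma\rvert$.

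\textbf{Main obstacle: $\alpha,\beta\trianglelefteq\gamma$.} This is the main difficulty. For $\beta\trianglelefteq\gamma$, I would prove an auxiliary lemma: if $\beta'\sqsubseteq\beta$ and $\delta$ is a run with $\beta'\trianglelefteq\delta$, then $\beta\trianglelefteq\beta[\beta'\gets\delta]$, where the substitution is generalised to upward differences. The proof goes by induction along the path from the root of $\beta$ to $\beta'$. At the nodes off the path, $\beta$ and $\gamma$ coincide in their children except for one child; for those we need reflexivity, and for the one changed child we use the induction. The targets only increase and the rules are unchanged, so each node of the path witnesses $\trianglelefteq$ with $\beta'=\beta$ itself in the definition.

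This reduces the claim to $\beta'\trianglelefteq\alpha''$. Here the witness is the copy of $\gamma'$ inside $\alpha''$, which is a subtree of $\alpha''$. Its rule equals $\Rule(\beta')$, its children satisfy $\beta'[i]\trianglelefteq\gamma_i$, and the targets satisfy $\vec b'\le\tgt{\gamma'}\le\tgt{\alpha''}$.

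Symmetrically, $\alpha\trianglelefteq\alpha''$ holds by the auxiliary lemma, since $\alpha'\trianglelefteq\gamma'$. Then $\alpha''\trianglelefteq\gamma$ holds because $\gamma$ contains $\alpha''$ as a subtree with an enlarged root context, and I would show that substitution into a larger context preserves $\trianglelefteq$ from the inserted piece, again by induction on the path. It remains to check transitivity of $\trianglelefteq$, which is claimed to be a quasi-order. The care needed is in the "smallest index" choice in the substitution and in ensuring that the subtree witnesses are positioned consistently.
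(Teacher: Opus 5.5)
Your proposal is correct and follows the paper's proof essentially step for step: the same induction on $\sigma$, the same run $\gamma'$ built from the $\gamma_i$, and the same two substitutions $\tau = \alpha[\alpha'\gets\gamma']$ and $\gamma = \beta[\beta'\gets\tau]$. You read the substitution in the upward direction, which is what the paper intends despite the sign slip in its definition, and you use the same auxiliary fact to get $\beta\trianglelefteq\gamma$ via $\beta'\trianglelefteq\tau$ witnessed by $\gamma'$. The only difference is at the end, where the paper avoids your transitivity detour: $\alpha\trianglelefteq\tau$, $\tau\sqsubseteq\gamma$ and $\tgt{\alpha}\le\tgt{\gamma}$ already give $\alpha\trianglelefteq\gamma$ directly from the definition of $\trianglelefteq$.
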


\begin{proof}

  We prove the lemma by structural induction on $\sigma$.
  Let $\sigma \trianglelefteq \alpha, \beta$ be three runs.
  Following the definition of $\trianglelefteq$, there are subruns $\alpha' \sqsubseteq \alpha$ and $\beta' \sqsubseteq \beta$ such that 
  $\tgt{\sigma} \le \tgt{\alpha}, \tgt{\alpha'}, \tgt{\beta}, \tgt{\beta'}$,
  $\Rule(\sigma) = \Rule(\alpha') = \Rule(\beta')$
  and $\sigma[i] \trianglelefteq \alpha'[i], \beta'[i]$ for every $i \in [1,\ra{\sigma}]$.
  Let $k$ denote the common arity of $\sigma$, $\alpha'$ and $\beta'$ and $\vec a$ be the update in $\Rule(\sigma)$ (\emph{i.e.}, the vector such that $\Rule(\sigma) = (p_1 \cdots p_k, \vec a, q)$ for some $p_1,\ldots,p_k,q \in Q$.)
  By induction hypothesis, for each $i \in [1,k]$ there is a run $\gamma_i$ such that $\alpha'[i], \beta'[i] \trianglelefteq \gamma_i$, 
  $\lvert \sigma[i] \rvert + \lvert \gamma_i \rvert = \lvert \alpha'[i] \rvert + \lvert \beta'[i] \rvert$ 
  and $\counters(\sigma[i]) + \counters(\gamma_i) = \counters(\alpha'[i]) + \counters(\beta'[i])$.
  Summing these equalities over $i=1,\ldots,k$
  and adding $\vec a$ yields
  \begin{equation} \label{eq:sum-amalg}
  \counters(\alpha') + \counters(\beta') = \counters(\sigma) + \vec a + \left (\sum_{i=1}^k \counters(\gamma_i) \right ) 
  \end{equation}

  Let $\gamma' \coloneqq q \left(\vec a + \sum_{i=1}^k \counters(\gamma_i)\right) \left \langle  \gamma_1,\ldots,\gamma_k \right \rangle$ be the run obtained by applying $\Rule(\sigma)$ to the runs $\gamma_i$.
  Using \eqref{eq:sum-amalg}, one can show that the following substitutions are well defined: $\tau \coloneqq \alpha[\alpha' \gets \gamma']$, and $\gamma \coloneqq \beta[\beta' \gets \tau]$.

  There only remains to show that $\alpha, \beta \trianglelefteq \gamma$ since the equality $\lvert \sigma \rvert + \lvert \gamma \rvert = \lvert \alpha \rvert + \lvert \beta \rvert$ is immediate.
  It is readily seen that $\beta' \trianglelefteq \tau$ (by considering the subrun $\gamma' \sqsubseteq \tau$). We derive that $\beta \trianglelefteq \gamma$, using the following observation, which is proved by induction.
  \begin{fact}
    Let $\eta, \theta$ be two runs, and $\eta' \sqsubseteq \eta$ be a subrun. If $\eta' \trianglelefteq \theta$, then $\eta \trianglelefteq \eta[\eta' \gets \theta]$.
  \end{fact}
  The same fact allows us to show that $\alpha \trianglelefteq \tau$. We deduce that $\alpha \trianglelefteq \gamma$ thanks to the following fact (a direct consequence of the definition of $\trianglelefteq$).
  \begin{fact}
    Let $\eta, \theta$ be two runs. If $\tgt{\eta} \le \tgt{\theta}$ and $\eta \trianglelefteq \theta'$ for some $\theta' \sqsubseteq \theta$, then $\eta \trianglelefteq \theta$.
  \end{fact}
  This concludes the proof of the lemma.
\end{proof}

\section{Putting BVASS runs on a diet} \label{sec:rearrangement}

In this section, we establish the main technical tool that will allow us to derive our results in the next sections.
Roughly speaking, we prove that every reachable configuration of a BVASS can be obtained by a run of bounded branching complexity.
The main ingredient in the proof is the amalgamation property established in the previous section (\cref{lem:ibvas-amalgamation}).
Within a run $\alpha$, we use it to replace an immediate subrun $\alpha[i]$ and a subrun $\beta'$ of another immediate subrun $\alpha[j]$, both of which are above a same simple run $\sigma \in \SpS$, by the small $\sigma$ on one side and the larger amalgamated run $\gamma$ on the other side. This transformation preserves the target of $\alpha$ and does not cause any counter to become negative. Repeating this operation progressively makes the run less balanced and more path-like.
An illustration is provided in \cref{fig:rearrangement}.

\begin{figure}[t]
  \begin{tikzpicture}[%
  baseline=(current bounding box.center),
  tnode/.style = {circle, draw=black, fill=black, inner sep=0.3ex},
  run/.style = {isosceles triangle, shape border rotate=90, draw=black, isosceles triangle stretches, minimum width=2.5cm, minimum height=1.5cm},
  decoration = {snake, segment length=1mm, amplitude=0.3mm}
]
  \node (m)  [tnode, label={[yshift=2ex]center:{}}] {};
  \node (ml) [tnode, label={[yshift=2ex, xshift=-3mm]center:{}}] at ([xshift=-14mm, yshift=-5mm]m) {};
  \node (mr) [tnode, label={[yshift=2ex, xshift= 3mm]center:{}}]   at ([xshift= 14mm, yshift=-5mm]m) {};
  \node (n)  [tnode, label={[xshift=-0.75mm]right:{}}] at ([yshift=-5mm]mr) {};

  \path[-]
    (m) edge (ml)
    (m) edge (mr)
    (mr) edge [decorate] (n)
  ;

  \node (alpha)      [run, label={[yshift=-1mm]below:$\alpha[j]$}, anchor=north, fill=blue!50, postaction={pattern=north west lines, pattern color=blue}] at (ml) {};
  \node (beta)       [run, label={[yshift=-1mm]below:$\beta$},  anchor=north] at (mr) {};
  \node (nu)         [run, fill=green!50!black!50, postaction={pattern=north east lines, pattern color=green!50!black}, anchor=north, minimum width=1.67cm, minimum height=1cm] at (n) {};

  \node (alphasigma) [run, fill=white, label={[yshift=0.25ex]center:$\sigma$}, minimum width=1cm, minimum height=6mm] at (alpha) {};
  \node (nusigma)    [run, fill=white, label={[yshift=0.25ex]center:$\sigma$}, minimum width=1cm, minimum height=6mm] at (nu) {};
\end{tikzpicture}
\hfill
\begin{tikzpicture}[baseline=(current bounding box.center), thick]
  \node (start) {};
  \node (end) [right of=start, node distance=8mm] {};

  \path[->] (start) edge (end);
\end{tikzpicture}
\hfill
\begin{tikzpicture}[%
  baseline=(current bounding box.center),
  tnode/.style = {circle, draw=black, fill=black, inner sep=0.3ex},
  run/.style = {isosceles triangle, shape border rotate=90, draw=black, isosceles triangle stretches, minimum width=2.5cm, minimum height=1.5cm},
  decoration = {snake, segment length=1mm, amplitude=0.3mm}
]
  \node (m)  [tnode, label={[yshift=2ex]center:{}}] {}; %
  \node (ml) [tnode, label={[yshift=2ex, xshift=-3mm]center:{}}]  at ([xshift=-14mm, yshift=-5mm]m) {};
  \node (mr) [tnode, label={[yshift=2ex, xshift= 3mm]center:{}}] at ([xshift= 14mm, yshift=-5mm]m) {};
  \node (n)  [tnode, label={[xshift=-0.75mm]right:{}}] at ([yshift=-5mm]mr) {}; %
  \node (p)  at ([yshift=-3mm]n) {};

  \path[-]
    (m) edge (ml)
    (m) edge (mr)
    (mr) edge [decorate] (n)
  ;

  \node (alphasigma) [run, label={[yshift=0.25ex]center:$\sigma$}, anchor=north, minimum width=1cm, minimum height=6mm] at (ml) {};
  \node (betap)      [run, label={[yshift=-1mm]below:}, anchor=north, minimum width=3.33cm, minimum height=2cm] at (mr) {}; %

  \node (nup)        [run, fill=blue!50, postaction={pattern=north west lines, pattern color=blue}, anchor=north, minimum width=2.5cm, minimum height=1.5cm] at (n) {};
  \node (nu)         [run, fill=green!50!black!50, postaction={pattern=north east lines, pattern color=green!50!black}, anchor=north, minimum width=1.67cm, minimum height=1cm] at (p) {};
  \node (nusigma)    [run, fill=white, label={[yshift=0.25ex]center:$\sigma$}, minimum width=1cm, minimum height=6mm] at (nu) {};
\end{tikzpicture}
  \caption{%
    Illustration of the key step in the proof of \cref{lem:decompo}. The run $\alpha$ is depicted on the left, with its $i$th immediate subrun replaced by the equivalent run $\beta$. In our example, $\alpha$ has arity two, $j=1$, $i=2$, and $i,j \in I$.
    To indicate that $\alpha[j], \beta' \trianglerighteq \sigma$ for some subrun $\beta' \sqsubseteq \beta$ and some $\sigma \in \SpS$, we depict $\alpha[j]$ and $\beta'$ by coloured triangles (blue and green respectively) surrounding a smaller triangle labelled $\sigma$ like in \cref{fig:amalgamation-of-runs}. Intuitively, $\alpha[j]$ and $\beta'$ are both equal to $\sigma$ plus some contexts inserted.
    The transformation extracts the contexts from $\alpha[j]$ and incorporates them into $\beta'$, using the amalgamation property \cref{lem:ibvas-amalgamation}.
    The resulting run $\alpha'$, depicted on the right, is less balanced, with a small run $\sigma \in \SpS$ on the left and a bigger run on the right.
  }
  \label{fig:rearrangement}
\end{figure}

\smallskip

Our bounded branching complexity statement can be formulated more conveniently by introducing an extension of the BVASS under consideration.
Given a BVASS $\BB \coloneqq (Q, \Delta)$ and a finite set $\vec{F} \subseteq \Reach(\BB)$,
we define the \emph{extension of $\BB$ with the configurations in $\vec{F}$} as $\Extend(\BB, \vec{F}) \coloneqq (Q, \Delta')$ where
$\Delta'$ is the (finite) set of triples $(p_{i_1} \cdots p_{i_\ell}, \vec{a}', q)$ in $Q^* \times \Z^d \times Q$ such that there exists a transition rule $(p_1 \cdots p_{k}, \vec a, q)\in \Delta$, $i_1 < \cdots < i_\ell$ in $[1,k]$, and $(\vec x_i)_{i \in [1,k] \backslash \{i_1,\ldots,i_\ell\}}$ such that
    \begin{itemize}
    \item $p_i(\vec x_i) \in \vec F$ for every $i \in [1,k] \backslash \{i_1,\ldots,i_\ell\}$
    \item $\vec a' = \vec a +  \sum_{i \in [1,k] \backslash \{i_1,\ldots,i_\ell\}} \vec x_i$.
    \end{itemize}
Note that $\Delta \subseteq \Delta'$.
Intuitively, $\Extend(\BB, \vec{F})$ has the same runs as $\BB$, except that subruns rooted at configurations in $ \vec{F}$ can be removed.

\begin{lemma} \label{lem:decompo}
  Let $\BB$ be BVASS,
  let $\SpS$ be a finite $\trianglelefteq$-basis of the set of runs of $\BB$, and
  let $\vec{F} \coloneqq \{\target(\sigma) \mid \sigma \in \SpS\}$.
  For every run $\alpha$ of $\BB$, there is a run $\eta$ of $\Extend(\BB, \vec{F})$
  with same target as $\alpha$ and whose branching depth is at most $\card{\SpS} - 1$.
\end{lemma}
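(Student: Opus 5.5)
We argue by strong induction on the size $\card{\alpha}$, and we prove a slightly stronger statement that keeps track of which elements of $\SpS$ can lie below $\alpha$. For a run $\alpha$, let $S(\alpha) \coloneqq \{\sigma \in \SpS \mid \sigma \trianglelefteq \alpha' \text{ for some } \alpha' \sqsubseteq \alpha\}$. This set is nonempty, since $\SpS$ is a $\trianglelefteq$-basis and so $\alpha$ itself lies above some $\sigma \in \SpS$. The strengthened claim is the following: for every run $\alpha$ of $\BB$ there is a run $\eta$ of $\Extend(\BB,\vec F)$ with $\tgt{\eta}=\tgt{\alpha}$ and $\bd(\eta) \le \card{S(\alpha)}-1$. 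The lemma follows because $\card{S(\alpha)} \le \card{\SpS}$.

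Consider a run $\alpha$ of arity $k$ with rule $(p_1\cdots p_k,\vec a,q)$.

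\textbf{Leaves and unary nodes.} If $k=0$, take $\eta=\alpha$, which has branching depth $0$. If $k=1$, apply the induction hypothesis to $\alpha[1]$ and reattach the same rule. Unary nodes do not increase branching depth, and $S(\alpha[1]) \subseteq S(\alpha)$.

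\textbf{Branching nodes.} Suppose $k \ge 2$. Call an index $i$ \emph{trivial} if $\alpha[i] \in \SpS$, and let $I$ be the set of nontrivial indices. Trivial children have targets in $\vec F$, so we delete them using the corresponding rule of $\Extend(\BB,\vec F)$ that absorbs their counters into the update. If $\card{I} \le 1$, we are reduced to the leaf or unary case.

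Now suppose $\card{I} \ge 2$. The aim is to apply a rearrangement until, for distinct $i,j \in I$, the sets $S(\alpha[i])$ and $S(\alpha[j])$ are "separated". The rearrangement step is as follows. Suppose there exist $i \ne j$ in $I$, a subrun $\beta' \sqsubseteq \alpha[i]$, and $\sigma \in \SpS$ with $\sigma \trianglelefteq \alpha[j]$ and $\sigma \trianglelefteq \beta'$. By \cref{lem:ibvas-amalgamation} we obtain $\gamma$ with $\counters(\gamma)=\counters(\alpha[j])+\counters(\beta')-\counters(\sigma)$ and $\beta' \trianglelefteq \gamma$, so that $\tgt{\gamma} \ge \tgt{\beta'}$. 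We then
\begin{itemize}
\item replace $\alpha[j]$ by $\sigma$, which has the same state because the rules agree, and
\item replace $\beta'$ inside $\alpha[i]$ by $\gamma$; on the path above $\beta'$, all counters increase by $\counters(\alpha[j])-\counters(\sigma) \ge \vec 0$, so this is essentially the substitution operation in reverse and stays valid.
\end{itemize}
The target of $\alpha$ is preserved, the total size is preserved, and now $j$ is trivial. Iterating this step, $\card{I}$ strictly decreases, so the iteration terminates. When it stops, for all $i \ne j$ in $I$ and every $\sigma \trianglelefteq \alpha[j]$ with $\sigma \in \SpS$, we have $\sigma \notin S(\alpha[i])$.

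\textbf{Bounding the branching depth.} The root minimal element $\sigma_j \trianglelefteq \alpha[j]$ lies in $S(\alpha[j])$ but not in $S(\alpha[i])$ for $i \ne j$. Moreover $S(\alpha[i]) \subseteq S(\alpha)$, and $\sigma_j$ is in $S(\alpha)$ for each $j \in I$. Hence $\card{S(\alpha[i])} \le \card{S(\alpha)}-1$ for each $i \in I$, using some $j \ne i$, which exists because $\card{I} \ge 2$. By induction, each $\alpha[i]$ with $i \in I$ is replaced by some $\eta_i$ of branching depth at most $\card{S(\alpha)}-2$. Applying the extended rule at the root gives branching depth at most $\card{S(\alpha)}-1$.

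\textbf{Main obstacle.} The induction must be set up so that the rearranged children are still smaller than $\alpha$, so that the induction hypothesis applies to them. Since each modified child is a proper subtree of the rearranged run, whose size equals $\card{\alpha}$, induction on size still works provided the hypothesis quantifies over all runs of a given size. The more delicate point is that $\gamma$ is inserted below $\beta'$ inside $\alpha[i]$, so $S(\alpha[i])$ may grow, though it remains inside $\SpS$. The separation claim must therefore be read only at the final configuration, and $S(\alpha)$ must be recomputed for the rearranged run, whose $S$-set may also change. To handle this, I would replace $S(\alpha)$ by a measure that is monotone under the rearrangement, or argue directly with $\SpS$ restricted to the elements below the root. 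This bookkeeping is where I expect the real care to be needed.
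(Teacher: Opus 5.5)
\textbf{There is a genuine gap}, and it is the one you flag at the end without resolving. Your measure $S(\alpha)$ is not invariant under the rearrangement. Inserting $\gamma$ in place of $\beta'$ raises the counters along the whole path from the root of $\alpha[i]$ down to $\beta'$. Raising targets makes the non-redundant condition $\tgt{\sigma'} \le \tgt{\theta}$ in the definition of $\trianglelefteq$ easier to satisfy. So nothing prevents the rearranged run $\tilde\alpha$ from having $S(\tilde\alpha) \supsetneq S(\alpha)$.

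Your separation argument only yields $\card{S(\tilde\alpha[i])} \le \card{S(\tilde\alpha)}-1$, and hence a run $\eta$ with $\bd(\eta) \le \card{S(\tilde\alpha)}-1$. That is not the bound $\card{S(\alpha)}-1$ your strengthened claim asserts for $\alpha$, so the induction does not close. The same problem arises when the iteration ends with $\card{I} \le 1$. Falling back on the trivial bound $\card{\SpS}$ does not help, because the induction needs a strict decrease from parent to child.

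The missing idea is to saturate the measure over everything a target- and size-preserving rearrangement can produce. The paper writes $\alpha \simeq \beta$ when the two runs have the same target and the same size, and sets
$\RR(\alpha) \coloneqq \{\sigma \in \SpS \mid \exists \beta \simeq \alpha,\ \exists \beta' \sqsubseteq \beta,\ \sigma \trianglelefteq \beta'\}$. This set has three properties:
\begin{itemize}
\item it is $\simeq$-invariant by construction;
\item it is nonempty;
\item it is still monotone under subtrees, since plugging any $\beta \simeq \alpha'$ in place of $\alpha' \sqsubseteq \alpha$ gives a run $\simeq \alpha$.
\end{itemize}
Instead of iterating, the paper picks, in the $\simeq$-class of $\alpha$, a representative minimising $\card{I}$; invariance of $\RR$ makes this choice free. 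It then proves $\RR(\alpha[i]) \subsetneq \RR(\alpha)$ for $i \in I$ by contradiction. Suppose $\sigma \trianglelefteq \alpha[j]$ lay in $\RR(\alpha[i])$, witnessed by $\beta \simeq \alpha[i]$ and $\beta' \sqsubseteq \beta$. Then your rearrangement step, applied to $\beta$ rather than to $\alpha[i]$ itself, would produce a run $\simeq \alpha$ with smaller $\card{I}$.

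Your iterative version would also need this extra freedom: at each step, allow swapping $\alpha[i]$ for an arbitrary $\simeq$-equivalent run before amalgamating. Otherwise the stopping condition only excludes $\sigma$ from $S(\tilde\alpha[i])$, not from $\RR(\tilde\alpha[i])$. With that change, and with $\RR$ in place of $S$, your argument becomes the paper's.
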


\begin{proof}
  Let $\BB$, $\SpS$, and $\vec{F}$ be as in the lemma.
  We define an equivalence relation $\simeq$ on runs of $\BB$ by
  $\alpha \simeq \beta$ if $\target(\alpha) = \target(\beta)$ and $|\alpha| = |\beta|$.
  For every run $\alpha$ of $\BB$,
  we introduce the set
  \[
    \RR(\alpha)
    \coloneqq
    \{\sigma \in \SpS \mid
    \exists \beta \in \Runs(\BB), \alpha \simeq \beta \wedge 
    \exists \beta' \sqsubseteq \beta , \sigma \trianglelefteq \beta'\}
  \]
  We start with three easy observations.
  Firstly,
  $\RR(\alpha) = \RR(\beta)$ for all runs $\alpha, \beta$ of $\BB$ with $\alpha \simeq \beta$.
  This observation directly follows from the definition of $\RR(\alpha)$.
  Secondly,
  $\RR(\alpha) \neq \emptyset$ for every run $\alpha$ of $\BB$.
  This observation follows from the assumption that
  $\SpS$ is a $\trianglelefteq$-basis of the set of runs of $\BB$.
  Thirdly,
  for every run $\alpha$ of $\BB$ and every $\alpha' \sqsubseteq \alpha$,
  we have $\RR(\alpha') \subseteq \RR(\alpha)$.
  Indeed,
  if $\sigma \in \RR(\alpha')$ then
  there exist a run $\beta$ of $\BB$ and a subrun $\beta' \sqsubseteq \beta$
  such that $\alpha' \simeq \beta$ and $\sigma \trianglelefteq \beta'$.
  We may replace in $\alpha$ the subrun $\alpha'$ by $\beta$ as they have the same target.
  The resulting run $\gamma$ has same size and same target as $\alpha$,
  hence,
  $\alpha \simeq \gamma$.
  Moreover,
  $\beta' \sqsubseteq \beta \sqsubseteq \gamma$ and $\sigma \trianglelefteq \beta'$, hence
  $\sigma \in \RR(\alpha)$.

  The lemma is an immediate consequence of the following claim.

  \begin{claim}
    For every run $\alpha$ of $\BB$, there is a run $\eta$ of $\Extend(\BB, \vec{F})$ with same target as $\alpha$ and whose branching depth satisfies $\bd(\eta) < \card{\RR(\alpha)}$.
  \end{claim}

  We prove the claim by induction on $|\alpha|$.
  If $|\alpha| = 1$ then $\bd(\alpha) = 0$,
  hence,
  $\bd(\alpha) < \card{\RR(\alpha)}$ since $\RR(\alpha)$ is not empty
  according to the second observation shown before the claim.
  So the run $\eta := \alpha$ satisfies the claim.
  Now let $K \geq 1$ and assume that the claim holds for every run $\alpha$ of size $|\alpha| \leq K$.
  Pick a run $\alpha$ of size $|\alpha| = K+1$.
  As $|\alpha| \geq 2$, $\alpha$ has arity at least one.
  We distinguish the immediate subruns of $\alpha$ depending on whether or not their target is in $\vec{F}$.
  Formally, let $I \coloneqq \{i \in [1, \ra{\alpha}] \mid \tgt{\alpha[i]} \not\in \vec{F}\}$
  be the set of indices of the immediate subruns of $\alpha$ whose target is not in $\vec{F}$.
  According to the first observation shown before the claim,
  we may assume w.l.o.g. that we picked a run $\alpha$ that minimises $|I|$
  among the runs with same target and same size as $\alpha$.
  We consider two cases depending on $|I|$.
  \smallskip

  The first case is when $|I| \leq 1$.
  So there exists an immediate subrun $\alpha'$ of $\alpha$ such that
  the target of every other immediate subrun of $\alpha$ is in $\vec{F}$.
  We have $\RR(\alpha') \subseteq \RR(\alpha)$
  according to the third observation shown before the claim.
  As $|\alpha'| \leq |\alpha| - 1 = K$,
  we get from the induction hypothesis that
  there is a run $\eta'$ of $\Extend(\BB, \vec{F})$ with
  same target as $\alpha'$ and
  whose branching depth satisfies $\bd(\eta') < \card{\RR(\alpha')}$.
  We transform $\alpha$ into a run $\eta$ of $\Extend(\BB, \vec{F})$ in two steps.
  First, we replace in $\alpha$ the immediate subrun $\alpha'$ by $\eta'$.
  Second, we remove the other immediate subruns of $\alpha$.
  Formally,
  $\eta$ is the tree defined by
  $\eta \coloneqq \tgt{\alpha} \langle \eta' \rangle$.
  It is readily seen that the tree $\eta$ is a run of $\Extend(\BB, \vec{F})$.
  Moreover,
  the arity of $\eta$ is one,
  so $\bd(\eta) = \bd(\eta') < \card{\RR(\alpha')} \leq \card{\RR(\alpha)}$.

  \smallskip

  The second case is when $|I| \geq 2$.
  We have $\RR(\alpha[i]) \subseteq \RR(\alpha)$ for each $i \in I$,
  according to the third observation shown before the claim.
  Let us show that this inclusion is strict.
  Let $i \in I$.
  There exists $j \in I$ with $i \neq j$, as $|I| \geq 2$.
  Since $\SpS$ is a $\trianglelefteq$-basis of $\Runs(\BB)$,
  there exists a run $\sigma \in \SpS$ such that
  $\sigma \trianglelefteq \alpha[j]$.
  Observe that $\sigma \in \RR(\alpha[j]) \subseteq \RR(\alpha)$.
  Let us prove that $\sigma \not\in \RR(\alpha[i])$. The following proof argument is illustrated on \cref{fig:rearrangement}.
  By contradiction,
  suppose that
  there exist a run $\beta$ of $\BB$ and a subrun $\beta' \sqsubseteq \beta$
  such that $\alpha[i] \simeq \beta$ and $\sigma \trianglelefteq \beta'$.
  Let $\gamma$ denote the run obtained by
  applying \cref{lem:ibvas-amalgamation} on $\sigma$, $\alpha[j]$ and $\beta'$.
  We transform $\alpha$ by replacing
  the immediate subrun $\alpha[j]$ by $\sigma$ and
  the immediate subrun $\alpha[i]$ by $\beta[\beta' \gets \gamma]$.
  Let $\alpha'$ denote the resulting tree.
  The properties of $\gamma$ ensure that $\alpha'$ is a run of $\BB$ with same size and same target as $\alpha$.
  However,
  the number of immediate subruns whose target is not in $\vec{F}$ is strictly less in $\alpha'$ than in $\alpha$.
  This contradicts the assumption that our choice of $\alpha$ minimises $|I|$.
  We have shown that
  $\RR(\alpha[i])$ is strictly contained in $\RR(\alpha)$,
  for every $i \in I$.

  Now,
  as $|\alpha[i]| \leq |\alpha| - 1 = K$,
  we get from the induction hypothesis that,
  for each $i \in I$,
  there is a run $\eta_i$ of $\Extend(\BB, \vec{F})$ with same target as $\alpha[i]$ and
  whose branching depth satisfies $\bd(\eta_i) < \card{\RR(\alpha[i])}$.
  It follows that $\bd(\eta_i) \leq \card{\RR(\alpha)} - 2$.
  As before,
  we transform $\alpha$ into a run $\eta$ of $\Extend(\BB, \vec{F})$ in two steps.
  First, we replace in $\alpha$ the immediate subrun $\alpha[i]$ by $\eta_i$, for each $i \in I$.
  Second, we remove every other immediate subrun.
  It is readily seen that the resulting tree is a run of $\Extend(\BB, \vec{F})$.
  The arity of $\eta$ is at least two,
  so $\bd(\eta) = 1 + \max_{i \in I} \bd(\eta_i) < \card{\RR(\alpha)}$.
  This concludes the proof of the claim, and the proof of the lemma.
\end{proof}

A finite $\trianglelefteq$-basis of runs contains all $\trianglelefteq$-minimal runs,
of which there are only finitely many since $\trianglelefteq$ is both a wqo and a partial order.
Unfortunately, minimal runs of a BVASS are not computable. In fact, they are already uncomputable for 2-dimensional VASS \cite{Amalgamation}. Consequently, we cannot even compute the number of minimal runs of a 2-VASS: if this number were computable, we could enumerate all runs, test each one for minimality, and stop once the known number of minimal runs had been found.
It is however still open whether an upper bound on the number of minimal runs of a BVASS is computable. Similarly, it is unknown whether the set of targets of the minimal runs is computable. Consequently, we do not know if \cref{lem:decompo} allows us, given a BVASS $\BB$, to compute a bound $b \in \N$ and a BVASS $\BB'$ with same reachability set as $\BB$ such that every $\vec c \in \Reach(\BB)$ is the target of a run of $\BB'$ with branching depth at most $b$.

It is natural to ask whether the property of $\SpS$ in the hypotheses of \cref{lem:decompo} can be weakened. For instance, would the lemma also hold if $\SpS$ only satisfies the following property
\[ \forall \vec c \in \Reach(\BB), \exists \alpha \in \Runs(\BB), \exists \sigma \in \SpS, \sigma \trianglelefteq \alpha \wedge \tgt{\alpha} = \vec c, \]
\emph{i.e.}, every configuration is reached by a run above some run in $\SpS$?

The following counterexample shows that this hypothesis is not enough.

\begin{figure}
  \centering
  \begin{tikzpicture}[
    >=Stealth,
    node distance=1.5cm,
    every state/.style={minimum size=8mm},
    auto
  ]

  \node[state] (p) {$p$};
  \node[state,right=of p] (q) {$q$};
  \node[left=0.8cm of p] (startp) {};
  \node[right=0.8cm of q] (startq) {};

  \draw[->,red] (p) to[out=125, in=195, looseness=1.2] ++(0,3)
                to[out=15, in=70, looseness=1.2] coordinate[pos=0.7] (pq1) (p);
  \draw[->,red] (q) to[out=90, in=60, looseness=2] coordinate[pos=0.7] (pq2) (p);

  \path[->] (p) edge[bend left]  (q);
  \path[->] (q) edge[bend left]  (p);

  \path[-] (pq2) edge[red] node[red, above] {$+$} (pq1);

  \path[->] (startp) edge node[above] {$0$} (p);
  \path[->] (startq) edge node[above] {$1$} (q);
\end{tikzpicture}
\hspace{2cm}
\begin{tikzpicture}[%
  tnode/.style = {circle, draw=black, fill=black, inner sep=0.4ex}
]

  \node (n1) [tnode, label={left:{$p(0)$}}] {};
  \node (n2) [tnode, label={left:{$q(0)$}}] at ([yshift=-7mm]n1) {};
  \node (n3) [tnode, label={left:{$p(0)$}}] at ([yshift=-7mm]n2) {};

  \node (p1) [tnode, label={left:{$p(1)$}}] at ([yshift=10mm]n1){};
  \node (q1) [tnode, label={right:{$q(1)$}}] at ([yshift=4mm,xshift=10mm]n1){};

  \node (dots) at ([yshift=8mm]p1) {$\vdots$};

  \node (pn) [tnode, label={left:{$p(n)$}}] at ([yshift=5mm]dots){};
  \node (qn) [tnode, label={right:{$q(1)$}}] at ([yshift=-1mm,xshift=10mm]dots){};

  \path[-]
    (n1) edge (n2)
    (n2) edge (n3)
    (p1) edge (q1)
    (pn) edge (qn)
    (p1) edge (n1)
  ;
\end{tikzpicture}
\caption{On the left: a $1$-BVASS. On the right: the run $\alpha_{n}$.}
\label{fig:Scounter}
\end{figure}

\begin{example}
  Consider the 1-BVASS $\BB$ depicted on the left of \cref{fig:Scounter}. Let $\SpS \coloneqq \{p(0)\langle \rangle, q(0)\langle p(0) \langle \rangle \rangle\}$ and $\vec F \coloneqq \{p(0), q(0)\}$ be the set of targets of runs in $\SpS$. The BVASS $\Extend(\BB, \vec F)$ is the same as $\BB$, but with an additional 0-loop on state $p$. It is easy to see that any run in $\Extend(\BB, \vec F)$ with target $p(n)$ or $q(n)$ must have $n$ leaves labelled $q(1)$. Hence, its branching depth is at least $\lceil\log_2 n\rceil$.

  To check that $\SpS$ satisfies the weak hypothesis stated above, consider the sequences of runs $(\alpha_n)_{n \in \N}$ and $(\beta_n)_{n \in \N}$ defined inductively as depicted in \cref{fig:Scounter} for $\alpha_n$, and by $\beta_n\coloneqq q(n)\langle \alpha_n\rangle$. For all $n \in \N$, we have $p(0) \langle \rangle \trianglelefteq \alpha_n$ and $q(0) \langle p(0) \langle \rangle \rangle \trianglelefteq \beta_n$.

  Thus, we have found a BVASS $\BB$ and a set $\SpS$ of runs satisfying the weak hypothesis such that for every $b \in \N$, there exists $\vec c \in \Reach(\BB)$ which cannot be reached by runs of branching depth less than $b$ in $\Extend(\BB, \vec F)$, where $\vec F$ is the set of targets of runs in $\SpS$.
\end{example}

We end this section with the following corollary. It shows how, from the bound on the branching depth in an equivalent BVASS provided by \cref{lem:decompo}, we can deduce a bound on the Strahler number.
The idea is that the measure ``infimum over all finite $\vec F \subseteq \Reach(\BB)$ of supremum over all $\vec c \in \Reach(\BB)$ of the branching depth needed to reach $\vec c$ in $\Extend(\BB, \vec F)$'' is finer than the measure ``supremum over all $\vec c \in \Reach(\BB)$ of Strahler number needed to reach $\vec c$''.

\begin{corollary} \label{cor:bounded-Strahler}
    For every BVASS $\BB$, there exists $s \in \N$ such that every configuration reached by $\BB$ is reached by a run whose Strahler number is at most~$s$.
\end{corollary}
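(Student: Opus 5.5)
Fix a BVASS $\BB$. By \cref{lem:ibvas-run-order-wpo}, $\Runs(\BB)$ admits a finite $\trianglelefteq$-basis $\SpS$. Let $\vec F \coloneqq \{\target(\sigma) \mid \sigma \in \SpS\}$. For each $\vec c \in \vec F$ fix a witness run $\sigma_{\vec c} \in \SpS$ with that target, and let $m \coloneqq \max_{\sigma \in \SpS} \sn(\sigma)$. I would take $s \coloneqq \card{\SpS} + m$ (possibly $+1$, depending on how the accounting below works out). Given a reachable $\vec c$, \cref{lem:decompo} yields a run $\eta$ of $\Extend(\BB,\vec F)$ with target $\vec c$ and $\bd(\eta) \le \card{\SpS}-1$. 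The plan is to convert $\eta$ back into a run of $\BB$ by re-inserting the removed subruns, and to bound its Strahler number.

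\textbf{Re-inflation.} By structural induction on $\eta$, define a run $\widehat\eta$ of $\BB$ with the same target. At the root, $\Rule(\eta) = (p_{i_1}\cdots p_{i_\ell}, \vec a', q) \in \Delta'$ comes from some $(p_1\cdots p_k,\vec a,q) \in \Delta$ and configurations $p_i(\vec x_i) \in \vec F$ for the removed indices $i$. Set $\widehat\eta$ to be $\tgt{\eta}$ applied to the children $\widehat{\eta[1]},\ldots,\widehat{\eta[\ell]}$ placed at positions $i_1,\ldots,i_\ell$, with $\sigma_{p_i(\vec x_i)}$ at each remaining position $i$. The equality $\vec a' = \vec a + \sum \vec x_i$ ensures that the counter values match, so $\widehat\eta$ is indeed a run of $\BB$.

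\textbf{Strahler bound.} I would prove by induction on $\eta$ that $\sn(\widehat\eta) \le \bd(\eta) + m + 1$. It is convenient to prove the sharper invariant $\sn(\widehat\eta) \le m + 1 + \bd(\eta)$ with leaves of $\eta$ giving at most $m+1$. At a node, the children are of two kinds: inserted $\sigma$'s, each with Strahler number at most $m$, and re-inflated children $\widehat{\eta[j]}$.
\begin{itemize}
\item If $\eta$ has arity $0$, all children are $\sigma$'s. The maximum is at most $m$, so the Strahler number is at most $m+1$.
\item If $\eta$ has arity $1$, there is one re-inflated child, with Strahler number at most $m+1+\bd(\eta[1])$, together with $\sigma$'s of Strahler number at most $m$. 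Since $m < m+1$, the maximum $m+1+\bd(\eta[1])$ is attained only once, unless the re-inflated child also sits at $m$, which is impossible. So the Strahler number does not increase, matching $\bd(\eta) = \bd(\eta[1])$.
\item If $\eta$ has arity at least $2$, let $t$ be the maximum of the children's Strahler numbers; then $t \le m+1+\max_j \bd(\eta[j])$. The node gets at most $t+1 \le m+1+\bd(\eta)$.
\end{itemize}
This gives $\sn(\widehat\eta) \le m+1+\card{\SpS}-1 = m+\card{\SpS}$.

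\textbf{Main obstacle.} The delicate point is the unary case: a unary node of $\eta$ may hide many re-inserted $\sigma$-siblings. Taken naively this could bump the Strahler number, and it is exactly why the invariant needs the slack ``$+1$'' over $m$. The leaf case must guarantee a value strictly above $m$, as arranged above, so that inserted $\sigma$'s never tie with the main child. Checking this tie-breaking carefully is the crux; everything else is routine induction.
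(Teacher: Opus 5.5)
Your proof is the paper's proof: the same finite basis $\SpS$ and set $\vec F$, the same re-attachment map (the paper's $\mathtt{reattach}$), the same invariant $\sn(\widehat\eta) \le m + 1 + \bd(\eta)$ proved by structural induction, and the same final bound $s = m + \card{\SpS}$. The arity-$0$ and arity-$\ge 2$ cases are correct as you wrote them.

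One step is justified incorrectly, although its conclusion is true. In the unary case you claim two things:
\begin{itemize}
\item the re-inflated child cannot tie with an inserted $\sigma$ at level $m$;
\item the leaf case guarantees a value strictly above $m$.
\end{itemize}
Both are false, because your invariant is only an upper bound. Nothing prevents $\sn(\widehat{\eta[1]})$ from being small. For example, it is $0$ when $\eta[1]$ is a leaf coming from a nullary rule of $\BB$. A $\sigma$-sibling with the same Strahler number then makes the Strahler number go up by one at that node. So your claim that ``the Strahler number does not increase'' is also wrong in general.

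The correct argument is a two-case split:
\begin{itemize}
\item If $\sn(\widehat{\eta[1]}) > m$, then it is the unique maximal child. Hence $\sn(\widehat\eta) = \sn(\widehat{\eta[1]}) \le m+1+\bd(\eta[1]) = m+1+\bd(\eta)$.
\item Otherwise every child has Strahler number at most $m$. Hence $\sn(\widehat\eta) \le m+1 \le m+1+\bd(\eta)$.
\end{itemize}
The slack $+1$ absorbs ties at or below $m$; it does not prevent them. The paper argues the same way: every child other than the re-attached one lies strictly below the bound $m+1+\bd(\eta)$, so the node cannot exceed that bound.
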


\begin{proof}

    Let $\BB$ be a BVASS. Let $\SpS$ and $\vec{F}$ be as in \cref{lem:decompo}, and let $b \coloneqq \card{\SpS} - 1$.
    
    For each $\vec c \in \vec F$, let $\sigma_{\vec c}$ be a run with target $\vec c$, and let $s_{\vec c}$ denote its Strahler number.
    We define $s \coloneqq \max_{\vec c \in \vec F}s_{\vec c} + b + 1$ and we show that every configuration reached by $\BB$ is reached by a run of Strahler number at most $s$.

    By \cref{lem:decompo}, every configuration reached by $\BB$ is the target of a run of $\Extend(\BB, \vec F)$ with branching depth at most $b$.
    It is therefore sufficient to define a target-preserving function mapping runs $\alpha$ of $\Extend(\BB, \vec F)$ with branching depth at most $b$ to runs of $\BB$ with Strahler number at most $s$.
    This function is very natural: we simply re-attach to $\alpha$ the runs $\sigma_{\vec c}$ that the additional rules of $\Extend(\BB, \vec F)$ allowed to cut.

    Formally, the function $\texttt{reattach}$ is defined by structural induction as follows.
    Let $\alpha \coloneqq \vec c \langle \alpha_1,\ldots,\alpha_k \rangle$ be a run of $\Extend(\BB, \vec F)$.
    Let $(p_1 \cdots p_{k'}, \vec a, q)$ be a rule of $\BB$, $i_1 < \cdots < i_k$ be elements of $[1,k']$, and $(\vec x_i)_{i \in [1,k'] \backslash \{i_1,\ldots,i_k\}}$ be such that
    \begin{itemize}
    \item $p_i(\vec x_i) \in \vec F$ for every $i \in [1,k'] \backslash \{i_1,\ldots,i_k\}$
    \item $\Rule(\alpha) = (p_{i_1} \cdots p_{i_k}, \vec a + \sum_{i \in [1,k'] \backslash \{i_1,\ldots,i_k\}} \vec x_i, q)$.
    \end{itemize}
    We define $\mathtt{reattach}(\alpha) \coloneqq \vec c \langle \beta_1, \ldots ,\beta_{k'} \rangle$ where $\beta_i$ is $\mathtt{reattach}(\alpha_i)$ if $i \in \{i_1,\ldots,i_k\}$ and otherwise $\sigma_{p_i(\vec x_i)}$.

    The rule, the indices $i_1,\ldots,i_k$, and the vectors $\vec x_i$ in the above paragraph exist by definition of $\Extend(\BB,\vec F)$, although they need not be unique. Different choices may therefore result in different runs $\mathtt{reattach}(\alpha)$.
    It is immediate to see by structural induction that $\mathtt{reattach}(\alpha)$ is a run of $\BB$ with same target as $\alpha$.
    We now show by structural induction that for every run $\alpha$ of $\Extend(\BB, \vec F)$, $\sn(\mathtt{reattach}(\alpha)) \le \max_{\vec c \in \vec F} s_{\vec c} + \bd(\alpha) + 1$.

    If $\alpha$ only has one immediate subrun $\alpha'$, then by induction hypothesis $\sn(\mathtt{reattach}(\alpha')) \le \max_{\vec c \in \vec F} s_{\vec c} + \bd(\alpha') + 1 \le \max_{\vec c \in \vec F} s_{\vec c} + \bd(\alpha) + 1$. The immediate subruns of $\mathtt{reattach}(\alpha)$ consist of $\mathtt{reattach}(\alpha')$ and possibly some runs $\sigma_{\vec c}$, whose Strahler number is at most $\max_{\vec c \in \vec F} s_{\vec c} < \max_{\vec c \in \vec F} s_{\vec c} + \bd(\alpha) + 1$. Thus, $\sn(\mathtt{reattach}(\alpha)) \le \max_{\vec c \in \vec F} s_{\vec c} + \bd(\alpha) + 1$.

    If $\alpha$ has at least two immediate subruns, then their branching depth is at most $\bd(\alpha)-1$, and by induction hypothesis the Strahler number of their ``re-attached'' version is at most $\max_{\vec c \in \vec F} s_{\vec c} + \bd(\alpha)$. The other immediate subruns of $\mathtt{reattach}(\alpha)$ also have Strahler number at most $\max_{\vec c \in \vec F} s_{\vec c} + \bd(\alpha)$ (and even $\max_{\vec c \in \vec F} s_{\vec c}$, since they are $\sigma_c$'s). Hence, $\sn(\mathtt{reattach}(\alpha)) \le \max_{\vec c \in \vec F} s_{\vec c} + \bd(\alpha) + 1$, completing the induction.
\end{proof}

\section{BVASS reachability sets are VASS sections} \label{sec:section-VASS}

In this section, we use the technical results of \cref{sec:rearrangement} to prove our main theorem, \cref{thm:BVASS-reachsets-are-VASS-sections}, and we then derive two corollaries.
We begin with a lemma stating that the set of configurations reachable by runs of bounded branching depth in a BVASS is a VASS section.
For every BVASS $\BB$ and natural number $b$, denote by $\Reach_{\bd \leq b}(\BB)$ the targets of runs of $\BB$ whose branching depth is at most $b$.

\begin{lemma} \label{lem:construction-VASS}
  For every $d$-BVASS $\BB$ and $b \in \N$, one can compute a VASS $\VV_b$ such that $\Sect_{[1,d]}^Q(\VV_b)=\Reach_{\bd \leq b}(\BB)$.
\end{lemma}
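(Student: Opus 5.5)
We argue by induction on $b$, building a VASS $\VV_b$ that simulates the runs of $\BB$ of branching depth at most $b$ bottom-up. The counters of $\VV_b$ are organised into $b+1$ blocks of $d$ counters each, block $0$ being the output block $[1,d]$; we may add finitely many auxiliary control states. The point is that a run of branching depth at most $b$ decomposes, viewed bottom-up, as follows. First comes a leaf. Then a chain of unary rules is applied. Then, possibly, a branching rule $(p_1\cdots p_k,\vec a,q)$ combines $k\ge 2$ subruns, each of branching depth at most $b-1$. Then another unary chain follows, and so on. Since only one subtree at each branching node can itself contain further branching of depth $b-1$ in an unbounded way, we cannot simply run the children in parallel. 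Instead we compute them \emph{sequentially}, accumulating their sum in a separate block of counters.

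For $b=0$, $\VV_0$ has counters $[1,d]$ and states $Q$ plus a fresh initial state. From the fresh initial state it sets the counters to some $\vec a$ with $(\varepsilon,\vec a,q)\in\Delta$ and moves to $q$. It then simulates each unary rule $(p,\vec a,q)$ as a VASS transition. Thus $\Reach(\VV_0)$ is exactly the set of targets of runs with $\bd=0$, and the section equals $\Reach_{\bd\le 0}(\BB)$. A small caveat is that $\VV_0$ must start from an initial configuration; we use the fresh state with $\vec 0$ as the unique initial configuration.

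For the inductive step, $\VV_b$ uses the counters of $\VV_{b-1}$, namely blocks $1,\dots,b$ (re-indexed), together with a new block $0$ serving as accumulator. Its control states are pairs recording a rule $(p_1\cdots p_k,\vec a,q)\in\Delta$ with $k\ge2$ and an index $j\le k$, together with copies of the states of $\VV_{b-1}$. To process a branching rule, $\VV_b$ proceeds as follows.
\begin{enumerate}
\item It invokes a copy of $\VV_{b-1}$ on the blocks $1..b$ to produce a configuration $p_j(\vec x_j)$, with all auxiliary blocks of $\VV_{b-1}$ zero.
\item It transfers block $1$ into the accumulator block $0$ by a loop of transitions decrementing one and incrementing the other.
\item It moves on to $j+1$.
\end{enumerate}
After $j=k$ it adds $\vec a$ to the accumulator, enters state $q$, and continues with unary rules acting on block $0$. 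At that point, for an output of depth $b$, it may either stop or again take a branching rule whose children, all but possibly one, have depth $\le b-1$. One child of a later branching rule is the accumulated current tree itself, so the accumulator in block $0$ is retained as one summand while the other children are computed through $\VV_{b-1}$ and added in.

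The main obstacle is correctness of the zero-checks. The transfer loops and the copies of $\VV_{b-1}$ require that the auxiliary blocks be exactly $\vec 0$ when a sub-simulation starts and finishes, but a VASS cannot test for zero. We handle this in the standard section style. A transfer loop that stops early leaves garbage in its source block, and that garbage is never removed afterwards, because each block is only decreased by its own transfer loops. Hence any run ending with all non-output blocks equal to $\vec 0$ performed every transfer completely. This is exactly why the statement takes the section $\Sect^Q_{[1,d]}$, which intersects with ``other counters $=0$'' and restricts to states in $Q$. Two things remain to check:
\begin{itemize}
\item soundness: every run of $\VV_b$ reaching a configuration with zero auxiliary blocks decodes into a run of $\BB$ of branching depth at most $b$;
\item completeness: every such run of $\BB$ is encoded.
\end{itemize}
The key invariant is monotonicity: counters of a block are only decremented by their own transfer, so leftovers can never be cancelled. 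Computability is immediate since the construction is explicit.
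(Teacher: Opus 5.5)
Your construction has a genuine gap in its soundness argument. You reuse block $1$ (and the deeper blocks) for the successive computations of all siblings. You then claim that a leftover from an incomplete transfer can never be removed, because each block is only decreased by its own transfer loops. This is false. Block $1$ is the output block of every invocation of $\VV_{b-1}$, so the simulated rules of the \emph{next} invocation decrement it, and the next transfer loop drains it too. A leftover is therefore silently absorbed by the next sibling, where it can enable rules that are disabled in $\BB$. The final condition that all non-output blocks are $\vec 0$ does not detect this.

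Concretely, take $d=1$ and the rules $(\varepsilon,1,p)$, $(\varepsilon,0,r)$, $(r,-1,r')$, $(p\,r',0,q)$, so that $\Reach(\BB)=\{p(1),r(0)\}$. Your $\VV_1$ can proceed as follows: compute $p(1)$ in block $1$ and transfer nothing; start the second child from the leaf $r(0)$ on top of the leftover $1$ and fire $(r,-1,r')$; transfer the resulting $0$ and apply the branching rule. It ends in $q(0)$ with block $1$ empty, so $q(0)\in\Sect_{[1,1]}^Q(\VV_1)$ although $q$ is not even reachable. Reusing counters sequentially is exactly what requires zero tests: your machine is essentially the VASSnz $\ZZ_b$ of \cref{sec:vassnz} with its zero tests deleted.

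The paper avoids this by never reusing counters. For a rule $\delta=(p_1\cdots p_k,\vec a,q)$, it runs $k$ copies of $\VV_b$ in a product with pairwise disjoint counter sets. These are drained into $[1,d]$ only at a fresh state $q_!$, after which no copy is ever resumed. The zero constraint of the section therefore genuinely certifies that each copy ended in a configuration of $\Sect_{[1,d]}^Q(\VV_b)$ and was transferred completely. The number of counters multiplies by the arity at each level, which is harmless since $b$ is fixed.

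A second, independent problem is that your inductive step describes the wrong class of runs. After a first branching rule, you allow further branching rules that take the accumulated tree as one child and compute the others with $\VV_{b-1}$. But the accumulated tree may already have branching depth $b$, so the new node has branching depth $b+1$. A run of branching depth at most $b$ has a single topmost branching node, \emph{all} of whose children (not ``all but one'') have branching depth at most $b-1$, with only unary rules above it. Your spine scheme instead characterises Strahler number at most $b$. Since $\sn\le\bd$ with strict inequality in general, even a zero-test-correct version would produce too large a set. For example, for the $1$-BVASS with rules $(\varepsilon,1,p)$ and $(pp,0,p)$, one has $\Reach_{\bd\le 1}(\BB)=\{p(1),p(2)\}$, whereas caterpillar-shaped runs of Strahler number $1$ reach every $p(n)$ with $n\ge 1$. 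The paper's recursion matches the definition of $\bd$ exactly: $\VV_{b+1}$ is $\bigsqcup_{\delta\in\Delta}\delta[\VV_b]$ extended only with the unary rules of $\BB$ acting on $[1,d]$.
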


\begin{proof}
  We first observe that VASS sections are effectively closed under finite unions. In fact, we can define an algorithm that takes as input two VASS $\VV$ and $\VV'$ and returns a VASS denoted by $\VV\sqcup\VV'$ such that $\Sect_{[1,d]}^Q(\VV\sqcup\VV')=\Sect_{[1,d]}^Q(\VV)\cup \Sect_{[1,d]}^Q(\VV')$. The VASS $\VV\sqcup\VV'$ is obtained as follows. First we consider the union of $\VV$,  $\VV'$ and the set of states $Q$. In this union the states of $\VV$ and $\VV'$ are renamed in such a way states are pairwise disjoint. By adding to this VASS zero-effect transitions from the two copies of each state $q\in Q$ to $q$, we get $\VV\sqcup\VV'$ satisfying the required property.

\smallskip

Next, we provide an algorithm that produces a VASS capturing the effect of a unique transition rule $\delta=(p_1 \cdots p_k,\vec{a},q)$. Given a set $\vec{C}$ of configurations, we introduce the set $\delta[\vec{C}]$ of configurations $q(\vec{x})$ such that there exist $p_1(\vec x_1),\ldots,p_k(\vec x_k) \in \vec C$ satisfying $\vec x = \vec a + \sum_{i=1}^k \vec x_i$. Starting from a VASS $\VV$ such that $\vec{C}=\Sect_{[1,d]}^Q(\VV)$, we construct a VASS $\delta[\VV]$ such that $\delta[\vec{C}]=\Sect_{[1,d]}^Q(\delta[\VV])$, as follows. Consider the Cartesian product of $k$ copies of $\VV$ with sets of counters pairwise disjoint and disjoint from $[1,d]$. As expected, the states of this Cartesian product are tuples $(q_1,\ldots,q_k)\in Q^k$. We add to this product the state $q$ (the target of $\delta$) and a fresh state $q_!$ serving as an intermediate state between the Cartesian product and $q$. We also add zero-effect transitions from the $(p_1,\ldots,p_k)\in Q^k$ to $q_!$, and a transition from $q_!$ to $q$ that adds $\vec{a}$ to the counters $[1,d]$. In order to transfer the counters from each copy of $\VV$ to $[1,d]$, we add $dk$ loops on $q_!$ that decrement a counter of some copy of $\VV$ while increasing the corresponding counter in $[1,d]$. It is routinely checked that the VASS $\delta[\VV]$ defined in this way satisfies $\delta[\vec{C}]=\Sect_{[1,d]}^Q(\delta[\VV])$.

\smallskip

We can now construct $\VV_b$ inductively on $b$. The VASS $\VV_0$ is simply the VASS obtained from $\BB$ by removing all branching transition rules. To construct $\VV_{b+1}$, we first let $\VV_b'=\sqcup_{\delta\in\Delta}\delta[\VV_b]$. We then add the unary transition rules of $\BB$ to $\VV_b'$ to obtain $\VV_{b+1}$.
\end{proof}

We can now state and prove the main theorem.

\begin{theorem} \label{thm:BVASS-reachsets-are-VASS-sections}
    Every reachability set of a BVASS is a section of VASS.
\end{theorem}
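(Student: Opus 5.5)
The plan is to combine \cref{lem:decompo} with \cref{lem:construction-VASS}, applied to the extended BVASS $\Extend(\BB,\vec F)$ rather than to $\BB$ itself. Let $\BB=(Q,\Delta)$ be a $d$-BVASS. By \cref{lem:ibvas-run-order-wpo}, the set $\Runs(\BB)$ is $\trianglelefteq$-upward-closed in itself and therefore has a finite $\trianglelefteq$-basis $\SpS$. Let $\vec F \coloneqq \{\target(\sigma)\mid \sigma\in\SpS\}$, let $\BB' \coloneqq \Extend(\BB,\vec F)$, and let $b\coloneqq\card{\SpS}-1$. Note that only existence is claimed, so the non-computability of $\SpS$ does not matter.

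The first step is to show that $\Reach(\BB') = \Reach(\BB)$.

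The inclusion $\Reach(\BB)\subseteq\Reach(\BB')$ is immediate because $\Delta\subseteq\Delta'$, so every run of $\BB$ is a run of $\BB'$.

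For the converse inclusion, I will use the reattachment argument from the proof of \cref{cor:bounded-Strahler}. Each configuration of $\vec F$ is the target of some run $\sigma\in\SpS$ of $\BB$. By structural induction, every run of $\BB'$ can be turned into a run of $\BB$ with the same target by re-inserting these runs $\sigma$ at the positions that the extended rules had cut. Alternatively, one can note that $\Reach(\BB)$ is an inductive invariant of $\BB'$, since every rule of $\Delta'$ is a rule of $\Delta$ with some arguments instantiated by reachable configurations in $\vec F$; hence $\Reach(\BB')\subseteq\Reach(\BB)$.

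The second step uses \cref{lem:decompo}. Every $\vec c\in\Reach(\BB)$ is the target of a run of $\BB'$ of branching depth at most $b$, so $\Reach(\BB)\subseteq \Reach_{\bd\le b}(\BB')$. Conversely, $\Reach_{\bd\le b}(\BB')\subseteq\Reach(\BB')=\Reach(\BB)$. Therefore $\Reach(\BB)=\Reach_{\bd\le b}(\BB')$.

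The final step applies \cref{lem:construction-VASS} to the $d$-BVASS $\BB'$ and the bound $b$. This yields a VASS $\VV_b$ with $\Sect^Q_{[1,d]}(\VV_b)=\Reach_{\bd\le b}(\BB')=\Reach(\BB)$, which proves the theorem.

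The only delicate point I expect is to check that $\BB'$ really is a BVASS to which \cref{lem:construction-VASS} applies. Its rule set $\Delta'$ is finite because $\vec F$ is finite. The extended rules may have arity $0$; these create new initial configurations $q(\vec a')$, which \cref{lem:construction-VASS} handles like any other nullary rule in the construction of $\VV_0$. They may also have arity $1$ but arise from branching rules of $\Delta$; such rules are treated as ordinary unary rules and do not increase branching depth. Both cases are consistent with the definition of $\bd$ used in \cref{lem:decompo}, so the two lemmas mesh exactly.
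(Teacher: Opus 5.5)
Your proposal is correct and follows the paper's proof essentially verbatim: take a finite $\trianglelefteq$-basis $\SpS$, set $b=\card{\SpS}-1$, use \cref{lem:decompo} to get $\Reach(\BB)=\Reach_{\bd\le b}(\Extend(\BB,\vec F))$, and conclude with \cref{lem:construction-VASS}. The paper treats $\Reach(\Extend(\BB,\vec F))=\Reach(\BB)$ as holding by definition, and your reattachment or inductive-invariant argument, which uses $\vec F\subseteq\Reach(\BB)$, is a correct justification of that step.
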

\begin{proof}
  Consider a $d$-BVASS $\BB$.
  The set of runs of $\BB$ admits a finite $\trianglelefteq$-basis $\SpS$ since
  $\trianglelefteq$ is a wqo.
  Let $b \coloneqq \card{\SpS} - 1$,
  let $\vec{F} \coloneqq \{\target(\sigma) \mid \sigma \in \SpS\}$, and
  let $\EE \coloneqq \Extend(\BB, \vec{F})$.
  By \cref{lem:decompo},
  we have $\Reach(\BB) \subseteq \Reach_{\bd \leq b}(\EE)$.
  It follows that $\Reach(\BB) = \Reach_{\bd \leq b}(\EE)$ since
  $\Reach_{\bd \leq b}(\EE) \subseteq \Reach(\EE) = \Reach(\BB)$ by definition.
  By applying \cref{lem:construction-VASS} to $\EE$,
  we conclude that $\Reach(\BB) = \Reach_{\bd \leq b}(\EE)$ is a section of VASS.
\end{proof}

With the same techniques as for \cref{thm:BVASS-reachsets-are-VASS-sections}, we can also prove the following proposition.

\begin{proposition}\label{prop:semilinearity}
   If the reachability set of a BVASS is semilinear, then it can be effectively computed.
\end{proposition}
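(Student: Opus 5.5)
We follow the classical VASS argument: when the reachability set is semilinear, a presentation of it can be found by enumerating candidate semilinear sets and checking each one. Two semi-decision procedures run in parallel.

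The first procedure enumerates presentations $\Gamma$ of semilinear sets of $Q \times \N^d$. For each candidate it checks whether $\sem{\Gamma}$ is an inductive invariant of $\BB$. This check is decidable, because inductiveness is expressible in Presburger arithmetic: for each of the finitely many rules $(p_1\cdots p_k,\vec a,q)\in\Delta$, we require that all $p_i(\vec x_i)\in\sem{\Gamma}$ imply $q(\vec a+\sum_i\vec x_i)\in\sem{\Gamma}$. Every inductive invariant contains $\Reach(\BB)$, so an accepted candidate gives the inclusion $\Reach(\BB)\subseteq\sem{\Gamma}$.

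The second procedure certifies the reverse inclusion $\sem{\Gamma}\subseteq\Reach(\BB)$. This is where the results of the previous sections are used, and it is the main obstacle. In general this inclusion is not semi-decidable directly, since $\sem{\Gamma}$ is infinite. We use the structure given by \cref{lem:decompo} and \cref{lem:construction-VASS}. The procedure enumerates finite sets $\vec F\subseteq\Reach(\BB)$, each certified by exhibiting a run of $\BB$ for every element, together with bounds $b\in\N$. For each pair it builds $\EE\coloneqq\Extend(\BB,\vec F)$, which is computable from $\vec F$. It then computes, by \cref{lem:construction-VASS}, a VASS $\VV_b$ with $\Sect^Q_{[1,d]}(\VV_b)=\Reach_{\bd\le b}(\EE)\subseteq\Reach(\BB)$. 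It remains to decide whether $\sem{\Gamma}\subseteq\Sect^Q_{[1,d]}(\VV_b)$, that is, whether a given semilinear set is included in a VASS section. This is the known decidable inclusion "semilinear $\subseteq$ VASS reachability set (up to section)". The standard route is to decide reachability of the complement: build a VASS that guesses an element of $\sem{\Gamma}$, which is semilinear and hence VASS-generable, and checks that it is not in the section. We decide instead through Leroux's result: $\sem{\Gamma}\setminus\Sect$ is empty iff a suitable Presburger separator exists, and both sides are semi-decidable. Concretely, we can invoke the classical fact, from Leroux's work on VASS semilinearity, that given a VASS section $\vec X$ and a semilinear $\vec S$, the property $\vec S\subseteq\vec X$ is decidable. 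VASS sections are closed under intersection with semilinear sets, and one reduces to reachability in a product VASS.

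Termination of the combined search follows from the existence results. If $\Reach(\BB)$ is semilinear, some $\Gamma$ presents it exactly, and this $\Gamma$ is an inductive invariant. By \cref{lem:decompo} there are $\vec F$ (the targets of a finite $\trianglelefteq$-basis $\SpS$) and $b=\card{\SpS}-1$ with $\Reach_{\bd\le b}(\EE)=\Reach(\BB)$, so the inclusion test succeeds for this triple $(\Gamma,\vec F,b)$. Dovetailing over all triples $(\Gamma,\vec F,b)$, with $\vec F$ certified by runs, and outputting the first $\Gamma$ that passes both tests yields a correct presentation. Any passing $\Gamma$ satisfies $\Reach(\BB)\subseteq\sem{\Gamma}\subseteq\Reach(\BB)$, so correctness does not depend on which triple is found first.

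The delicate step is justifying the decidability of $\sem{\Gamma}\subseteq\Sect^Q_{[1,d]}(\VV_b)$. If a direct citation is unavailable, we fall back to a weaker certificate that is still enough. Here the section is the finite union, over the states $q$, of projections of reachability sets restricted to configurations that are zero on the auxiliary counters. We would then show inclusion by exhibiting, for each linear component $\vec b+\vec P$ of $\Gamma$, a run reaching $\vec b$ together with runs realising each period as a pumpable context. This is semi-decidable by enumeration, and its completeness is guaranteed by the almost-semilinear structure of VASS sections.
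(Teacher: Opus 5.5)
\textbf{There is a genuine gap: the inclusion test $\sem{\Gamma}\subseteq\Sect^Q_{[1,d]}(\VV_b)$.} The rest of your architecture is sound. A candidate $\Gamma$ that is an inductive invariant (decidable in Presburger arithmetic) and satisfies the inclusion for some certified $\vec F$ does present $\Reach(\BB)$, and \cref{lem:decompo} guarantees the search terminates. But the inclusion test carries the whole difficulty, and none of your justifications for it holds.

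Non-inclusion is the easy direction: enumerate $\vec s\in\sem{\Gamma}$ and decide VASS reachability. What you need is a finite certificate that the infinite set $\sem{\Gamma}$ lies inside the section. The three routes you offer do not provide one.
\begin{itemize}
\item A VASS cannot ``check that a guessed element is not in the section''. If $\sem{\Gamma}\setminus X$ were effectively a VASS section for every semilinear $\sem{\Gamma}$, take $\sem{\Gamma}=Q\times\N^d$: complements of VASS sections would then be VASS sections, and inclusion of VASS reachability sets would be decidable, contradicting~\cite{HACK197677}.
\item Presburger separators certify non-reachability, so they again serve only the easy direction.
\item Your fallback certificate is not complete, and almost semilinearity says nothing about how the points of a linear subset are produced by runs. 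Take a VASS started in $q_0(\vec 0)$ with a loop of effect $(1,0,1)$ on $q_0$, a zero-effect transition from $q_0$ to $q_1$, and a loop of effect $(0,1,-1)$ on $q_1$. Its section at $q_1$, with the third counter required to be zero, is $\{(n,n)\mid n\in\N\}$. This set is infinite, so every presentation of it uses a period $(k,k)$ with $k>0$. Yet no cycle of this VASS has effect $(k,k,0)$ with $k>0$. Under a more generous reading (several cycles inserted at different positions and pumped in lockstep), completeness becomes a flattability-type theorem, a deep result you would have to prove or cite.
\end{itemize}

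\textbf{The missing ingredient is the one the paper uses.} By Corollary~VII.7 and Theorem~VII.11 of~\cite{CGL25}, one can decide whether a VASS section is semilinear and, if so, compute a presentation of it. With this your test becomes decidable. Indeed, $\sem{\Gamma}\subseteq X$ holds iff the VASS section $X\cap\sem{\Gamma}$ is semilinear and its computed presentation is equivalent to $\Gamma$, and then your proof goes through.

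However, once you have that result, enumerating candidates $\Gamma$ is superfluous. The paper proceeds as follows:
\begin{itemize}
\item enumerate $\Reach(\BB)$ as $(\vec c_n)_{n\in\N}$;
\item compute the VASS $\VV_n$ of \cref{lem:construction-VASS} for $\Extend(\BB,\{\vec c_0,\ldots,\vec c_n\})$ with bound $n$;
\item test whether its section is semilinear and, if so, compute a presentation;
\item return that presentation as soon as it is an inductive invariant of $\BB$.
\end{itemize}
Termination again follows from \cref{lem:decompo}.
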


\begin{proof}
  Let $\BB$ be a $d$-BVASS whose reachability set is semilinear, and let $(\vec c_n)_{n \in \N}$ be an enumeration of its reachability set.
  By \cref{lem:construction-VASS}, for each $n \in \N$, we can compute a VASS $\VV_n$ such that $\Sect_{[1,d]}^Q(\VV_n) = \Reach_{\bd \le n}(\Extend(\BB, \{c_0,\ldots,c_n\}))$.
  We use Corollary VII.7 and Theorem VII.11 of~\cite{CGL25}, which together imply that it is decidable whether a VASS section is semilinear and, if it is, that a semilinear presentation can be computed.
  
  The reachability set of $\BB$ can be computed by repeating the following instruction for $n \in \N$.
  \begin{itemize}
    \item Compute VASS $\VV_n$.
    \item For each $\VV_n$, test whether $\Sect_{[1,d]}^Q(\VV_n)$ is semilinear.
    \item If it is semilinear, compute a semilinear presentation.
    \item Test whether this semilinear set is an inductive invariant for $\BB$, and if yes return it.
  \end{itemize}

  To see that this algorithm is correct, observe first that \[\Sect_{[1,d]}^Q(\VV_n) = \Reach_{\bd \le n}(\Extend(\BB, \{c_0,\ldots,c_n\})) \subseteq \Reach(\BB).\]
  Hence, $\Sect_{[1,d]}^Q(\VV_n) = \Reach(\BB)$ if and only if $\Sect_{[1,d]}^Q(\VV_n)$ is semilinear and is an inductive invariant of $\BB$. 
  Moreover, by \cref{lem:decompo}, there exists $n \in \N$ such that
   $\Reach(\BB)$ is equal to $\Reach_{\bd \le n}(\Extend(\BB, \{c_0,\ldots,c_n\}))$ , so the algorithm terminates.
\end{proof}

\begin{remark}
  The equality of two VASS reachability sets is undecidable~\cite{HACK197677}. Consequently, although the sequence of VASS $(\VV_n)_{n\in\N}$ constructed in the previous proof can be effectively generated for every BVASS, we cannot in general detect whether $\Sect{[1,d]}^Q(\VV_n)$ has reached the full reachability set of $\BB$. Hence, this sequence does not directly yield a decision procedure for BVASS reachability.
\end{remark}

The reachability set of a BVASS is not semilinear in general (this is already the case for VASS). We nevertheless show that it is almost semilinear. This geometric property was established for VASS and used in~\cite{Turing-100:Vector_Addition_Systems_Reachability} to show that, for every VASS and every unreachable configuration, there exists a semilinear inductive invariant excluding that configuration.

\smallskip

Formally, an \emph{almost semilinear set} of $Q\times\N^d$ is a finite union of sets of the form $q(\vec{b}+\vec{P})$ where $q\in Q$ is a state, $\vec{b}$ is a vector in $\N^d$ called the \emph{basis}, and $\vec{P}$ is a subset of $\N^d$ called the \emph{periodic set} and satisfying $\vec{0}\in\vec{P}$, $\vec{P}+\vec{P}\subseteq \vec{P}$ and such that the \emph{cone} $\Q_{\geq 0}\vec{P}=\{\lambda\vec{p} \mid \lambda\in\Q_{\geq 0}\wedge \vec{p}\in\vec{P}\}$ spanned by $\vec{P}$ is definable in $FO(\Q_{\geq 0},+,=)$. In~\cite{Turing-100:Vector_Addition_Systems_Reachability}, intersections of VASS reachability sets with semilinear sets were proved to be almost semilinear. Since VASS sections are intersections of VASS reachability sets with semilinear sets and semilinear sets are stable by finite intersections, we deduce that intersections of VASS sections with semilinear sets are also almost semilinear. From the previous \cref{thm:BVASS-reachsets-are-VASS-sections}, we deduce the following corollary.
\begin{corollary}\label{cor:BVASS-reachsets-are-almostsemilinear}
  Intersections of BVASS reachability sets with semilinear sets are almost semilinear.
\end{corollary}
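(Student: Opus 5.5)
The plan is to reduce the statement to the known result for VASS, using \cref{thm:BVASS-reachsets-are-VASS-sections}. Take a $d$-BVASS $\BB=(Q,\Delta)$ and a semilinear set $\vec L\subseteq Q\times\N^d$. By \cref{thm:BVASS-reachsets-are-VASS-sections} there is a VASS $\VV$ with counters $[1,d']$ for some $d'\ge d$, a set $Q'\subseteq Q_\VV$ (identified with $Q$), and $I=[1,d]$, such that $\Reach(\BB)=\Sect_I^{Q'}(\VV)$. Hence $\Reach(\BB)\cap\vec L$ is the image under the projection $\pi: q(\vec x)\mapsto q(\restr{\vec x}{I})$ of $\Reach(\VV)\cap\vec L'$, where
\[\vec L' \coloneqq \{q(\vec x)\mid q\in Q',\ \restr{\vec x}{[1,d']\setminus I}=\vec 0,\ q(\restr{\vec x}{I})\in\vec L\}.\]
The set $\vec L'$ is semilinear: it is the product of $\vec L$ with the linear set $\{\vec 0\}$ on the extra coordinates, restricted to the states in $Q'$.

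Next, apply the result of~\cite{Turing-100:Vector_Addition_Systems_Reachability} recalled above: $\Reach(\VV)\cap\vec L'$ is almost semilinear. So it is a finite union of sets $q(\vec b+\vec P)$, where $\vec P\subseteq\N^{d'}$ is a periodic set whose cone is definable in $FO(\Q_{\ge0},+,=)$. Each such component lies inside $\vec L'$, so every vector in $\vec b+\vec P$ vanishes outside $I$. Since $\vec 0\in\vec P$, the basis $\vec b$ vanishes outside $I$. Moreover, for every $\vec p\in\vec P$ the vector $\vec b+\vec p$ vanishes outside $I$, hence so does $\vec p$.

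Finally, project. The image of such a component under $\pi$ is $q(\restr{\vec b}{I}+\restr{\vec P}{I})$. The set $\restr{\vec P}{I}$ still contains $\vec 0$ and is closed under addition, since restriction is linear. Its cone is $\restr{(\Q_{\ge0}\vec P)}{I}$, which is definable in $FO(\Q_{\ge0},+,=)$ by existentially quantifying the coordinates outside $I$. These coordinates are in fact identically zero, so one can also simply substitute $0$ for them. A finite union of such sets is almost semilinear.

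The only step that needs care is checking that the notion of section used in \cref{thm:BVASS-reachsets-are-VASS-sections} matches the form required here. The paper's sections fix the state set $Q'$ and the coordinates $I$, and the states of $\VV$ reaching $Q$ are the original ones by the construction in \cref{lem:construction-VASS}, so I expect this check to be routine.
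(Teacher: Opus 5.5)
Your proof is correct and follows the same route as the paper: you combine \cref{thm:BVASS-reachsets-are-VASS-sections} with the result of~\cite{Turing-100:Vector_Addition_Systems_Reachability} that intersections of VASS reachability sets with semilinear sets are almost semilinear. You are more careful than the paper's one-line argument, which glosses over the projection in the definition of sections; your observation that each component $q(\vec b+\vec P)$ vanishes outside $I$, so that projecting preserves both periodicity and FO-definability of the cone, fills exactly that step.
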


\section{Strahler-Bounded Reachability} \label{sec:vassnz}
We prove that the following bounded-variant of the reachability problem for BVASS is decidable.
\decisionproblem{Strahler-Bounded Reachability}{A BVASS $\BB$, a configuration $q(\vec{x})$, and $s\in\N$}{Is the configuration $q(\vec{x})$ reachable by a run of Strahler number at most $s$ ?}
This problem is motivated by \cref{cor:bounded-Strahler} since for every BVASS $\BB$ there exists a minimal $s\in\N$, called the \emph{Strahler number} of $\BB$, such that every reachable configuration is reachable by a run of Strahler number at most $s$.

\medskip

Given a $d$-BVASS $\BB \coloneqq (Q, \Delta)$%
, we define a VASS with nested zero tests $\ZZ_s$ such that $\Sect_{[1,d]}^Q(\ZZ_s)$ coincides with the set of configurations of $\BB$ reachable by a run whose Strahler number is at most $s$. 
\emph{VASS with nested zero tests} (abbreviated \emph{VASSnz}) extend classical VASS by allowing restricted forms of zero tests: transitions may simultaneously test for zero all counters belonging to some counter subset $I_k$ in an increasing sequence $I_1 \subset I_2 \subset \dots \subset I_K$. 
Membership in a VASSnz section is decidable~\cite[corollary VII.7, theorem VII.11]{CGL25}, hence this construction is enough to prove decidability of the above problem.

\smallskip

Instead of describing $\ZZ_s$ by explicitly listing its states and transitions, we specify it by means of \cref{algo:reach_vassnz,algo:reach_vassnz0}. These algorithms can easily be compiled into a VASSnz, as they only involve state comparisons, constant additive counter updates, zero tests, finite branching, nondeterministic choices, and function calls with bounded nesting depth.
The $\star$ symbol at lines $4$ and $10$ denotes a non-deterministic choice condition (this condition is non-deterministically evaluated to false or true at each iteration of the \textbf{while}-loop). The \textbf{abort} instruction is non terminating (a state of the VASSnz without any outgoing transition).

\SetKw{Abort}{abort}
\SetKw{Assert}{assert}
\SetKw{Call}{call}
\SetKw{Pick}{pick}

\newcommand{\add}[2]{{#1} \ {+}{=} \ {#2}}

\begin{algorithm}[t]
  \DontPrintSemicolon
  \Pick $(\varepsilon, \vec a, q)$ in $\Delta$ with $\vec a \geq \vec 0$\; \label{l:initS}
  $\mathtt{state}^{(s)} := q$\;
  $\add{(\mathtt{c}_1^{(s)}, \ldots, \mathtt{c}_d^{(s)})}{(\vec{a}(1), \ldots, \vec{a}(d))}$\; \label{l:initE}
  \While{$\star$ \label{l:transS}}{
    \Pick $(p_1 \cdots p_k, \vec a, q)$ in $\Delta$ and $j_0$ in $[1,k]$ such that $\mathtt{state}^{(s)} = p_{j_0}$\; 
    \For{$j \in [1,k]$}{
      \If{$j \neq j_0$}{
      \Call{$\mathtt{REACH}_{p_{j}}^{(s-1)}$}\;
      \While{$\star$ \label{l:transferS}}{
        \Pick $i$ in $[1, d]$\;
        $\add{(\mathtt{c}_i^{(s-1)}, \mathtt{c}_i^{(s)})}{(-1, 1)}$\; \label{l:transferE}
      }
      \Assert{$\mathtt{c}_1^{(1)} = \cdots = \mathtt{c}_d^{(1)} = \cdots = \mathtt{c}_1^{(s-1)} = \cdots = \mathtt{c}_d^{(s-1)} = 0$}\; \label{l:zerotest}
    }}
    $\mathtt{state}^{(s)} := q$\;
    $\add{(\mathtt{c}_1^{(s)}, \ldots, \mathtt{c}_d^{(s)})}{(\vec{a}(1), \ldots, \vec{a}(d))}$\; \label{l:transE}
  }
  \If{$\mathtt{state}^{(s)} = t$ \label{l:retS}}{
    \Return\; \label{l:return} %
  }
  \Else{
    \Abort\; \label{l:retE}
  }

  \caption{$\mathtt{REACH}_t^{(s)}$ for $t \in Q$ and $s \in \N$}
  \label{algo:reach_vassnz}
\end{algorithm}

\begin{algorithm}[t]
  \DontPrintSemicolon
  \Abort\;
  \caption{$\mathtt{REACH}_t^{(-1)}$ for $t \in Q$}
  \label{algo:reach_vassnz0}
\end{algorithm}

\smallskip

The algorithm $\mathtt{REACH}_t^{(s)}$ simulates a bottom-up traversal of a run of $\BB$ whose target state is $t$ and whose Strahler number is at most $s$ \footnote{The reason why there is a different algorithm $\mathtt{REACH}_t^{(s)}$ for every target state $t$ is only to simplify the recursive calls.}. It first nondeterministically selects an initial configuration (lines \ref{l:initS}–\ref{l:initE}), then performs an arbitrary number of transition rules of $\BB$ (lines \ref{l:transS}–\ref{l:transE}), and finally reaches the return line if the current control state is $t$ (lines \ref{l:retS}–\ref{l:retE}).

\smallskip

To simulate a branching transition rule of $\BB$, $\mathtt{REACH}_t^{(s)}$ makes the required number of calls to $\mathtt{REACH}_p^{(s-1)}$ (namely, the arity of the branching rule minus one) on auxiliary counters. It then transfers the content of these auxiliary counters back to the main ones (line \ref{l:transferS}–\ref{l:transferE}), and finally uses a zero test (line \ref{l:zerotest}) to ensure that all auxiliary counters have been completely emptied. Zero tests play a crucial role here: since auxiliary counters may be reused for several calls to $\mathtt{REACH}_p^{(s-1)}$, any residual value left in them could incorrectly enable transitions during subsequent calls.

\smallskip

We denote by $\ZZ_s$ the VASS obtained by compiling the algorithms $\mathtt{REACH}_t^{(s)}$ for all $t \in Q$ and taking their disjoint union.
More precisely, observe that each algorithm $\mathtt{REACH}_t^{(s)}$ can be compiled into a VASS with nested zero tests $\ZZ_{s,t}$ having $(s+1)d$ counters, one for each integer variable $\mathtt{c}_i^{(j)}$, $i \in [1,d]$, $j \in [0,s]$. We index them so that the variables $(\mathtt{c}_1^{(s)}, \ldots, \mathtt{c}_d^{(s)})$ correspond to indices $1,\ldots,d$.
The states of $\ZZ_{s,t}$ encode the current values of each variable $\mathtt{state}^{(j)}$, the height of the bounded call stack —\emph{i.e.}, which $\mathtt{REACH}_p^{(j)}$ we are currently executing—, and the current line in this $\mathtt{REACH}_p^{(j)}$.
The VASS $\ZZ_s$ has the same counters as the $\ZZ_{s,t}$. Its set of states is a disjoint union of $Q$ with the set of states of the $\ZZ_{s,t}$. It has all the transitions of the $\ZZ_{s,t}$, as well as additional transitions of effect zero allowing to jump to state $t$ from any state of $\ZZ_{s,t}$ corresponding to the return line of $\mathtt{REACH}_t^{(s)}$.

The VASS $\ZZ_s$ satisfy the following fact.

\begin{fact} \label{fact:construction-Zs}
  For all $s \in \N$ and $t \in Q$, the set of counter values $\vec x$ such that $t(\vec x) \in \Sect_{[1,d]}^Q(\ZZ_s)$ coincides exactly with the set of values that $(\mathtt{c}_1^{(s)},\ldots,\mathtt{c}_d^{(s)})$ may hold when executing the return line (line \ref{l:return}) in $\mathtt{REACH}_t^{(s)}$.
\end{fact}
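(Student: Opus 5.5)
The plan is to unfold the definition of $\Sect_{[1,d]}^Q(\ZZ_s)$ and to match runs of $\ZZ_s$ that end in a state of $Q$ with executions of $\mathtt{REACH}_t^{(s)}$ that reach line~\ref{l:return}. By definition, $t(\vec x) \in \Sect_{[1,d]}^Q(\ZZ_s)$ holds if and only if some run of $\ZZ_s$, started at its initial configuration, ends in state $t \in Q$ with counters $\vec y$ such that $\restr{\vec y}{[1,d]} = \vec x$ and every counter outside $[1,d]$ is zero. I would assume, as the compilation is meant to provide, that the initial configurations of $\ZZ_s$ are the entry states of the $\ZZ_{s,t'}$ with all counters at zero. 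Then, since indices $1,\ldots,d$ were chosen for $(\mathtt{c}_1^{(s)},\ldots,\mathtt{c}_d^{(s)})$, the statement splits into two claims. The first is a correspondence between runs and executions. The second is that the auxiliary counters are automatically zero at the return line.

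For the first claim, I would examine the structure of $\ZZ_s$. The states of $Q$ are entered only by the added zero-effect jump transitions, and these start from states of $\ZZ_{s,t}$ that encode the return line of the outermost $\mathtt{REACH}_t^{(s)}$. States of $Q$ have no outgoing transitions. The components $\ZZ_{s,t'}$ for different $t'$ are disjoint. Hence a run of $\ZZ_s$ ending in $t$ is exactly a run of $\ZZ_{s,t}$ from its entry state to a return-line state at the bottom of the call stack, followed by one jump that leaves the counters unchanged. By construction of the compilation, runs of $\ZZ_{s,t}$ are in bijection with partial executions of $\mathtt{REACH}_t^{(s)}$. Under this bijection, the value of counter $\mathtt{c}_i^{(j)}$ in the configuration equals the value of the variable of the same name. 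This holds because states record the $\mathtt{state}^{(j)}$ variables, the stack height and the current line, while counter updates, nondeterministic choices, zero tests and \textbf{abort} are translated faithfully. I would check this fact by induction on run length.

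The second claim is an invariant, and I expect it to be the main obstacle. The invariant is that whenever control is at a line of the level-$s$ procedure outside a call together with its subsequent transfer loop, all counters $\mathtt{c}^{(j)}_i$ with $j<s$ are zero. It holds at the start, since all counters are initially zero. Each call $\mathtt{REACH}^{(s-1)}_{p_j}$ is followed by the transfer loop and then by the assertion at line~\ref{l:zerotest}. The assertion is the only way control returns to the level-$s$ body, and it restores the invariant. The delicate point is that this zero test must cover every auxiliary level, including the lowest counters $\mathtt{c}^{(0)}_i$ used by level-$0$ calls. I would read the assertion as zero-testing all counters of levels below $s$. These tests form the nested family $I_1 \subset \cdots \subset I_K$ of the VASSnz. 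With that reading the invariant holds, so at line~\ref{l:return} of the outermost procedure every non-main counter is zero.

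Combining the two claims gives both inclusions. Every value of $(\mathtt{c}_1^{(s)},\ldots,\mathtt{c}_d^{(s)})$ at the return line yields a run of $\ZZ_s$ ending in $t$ with zero auxiliary counters. Conversely, every run witnessing $t(\vec x)\in\Sect_{[1,d]}^Q(\ZZ_s)$ comes from such an execution, and its main counters hold $\vec x$.
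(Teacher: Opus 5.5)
Your argument is correct and is essentially the paper's. The paper gives no separate proof and simply asserts that $\ZZ_s$ satisfies this fact by construction. Your two claims spell out what that assertion relies on: runs of $\ZZ_s$ ending in $t$ are exactly compiled executions of the outermost $\mathtt{REACH}_t^{(s)}$ followed by the zero-effect jump, and all lower-level counters are zero at the return line.

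Reading the assertion at line~\ref{l:zerotest} as zero-testing levels $0,\ldots,s-1$ is necessary, not just convenient. The paper literally lists only levels $1,\ldots,s-1$, so under that reading the counters $\mathtt{c}^{(0)}_i$ are never tested and a level-$0$ call can leave residue there. The set of return values could then be strictly larger than the section.
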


\begin{lemma} \label{lem:construction-VASSnz}
  $\Sect_{[1,d]}^Q(\ZZ_s)$ is the set of configuration reachable by a run of $\BB$ of Strahler number at most $s$.
\end{lemma}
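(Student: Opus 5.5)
By \cref{fact:construction-Zs}, it suffices to prove the following statement by induction on $s \ge -1$. For every $t \in Q$, a vector $\vec x$ is a possible value of $(\mathtt{c}_1^{(s)},\ldots,\mathtt{c}_d^{(s)})$ at line~\ref{l:return} of $\mathtt{REACH}_t^{(s)}$ exactly when $t(\vec x)$ is the target of a run of $\BB$ with Strahler number at most $s$. Here we assume $\mathtt{REACH}_t^{(s)}$ is started with its level-$s$ counters at $\vec 0$ and that no other counters of index $<s$ are modified outside the call. For $s=-1$ both sides are empty: $\mathtt{REACH}^{(-1)}$ aborts, and no run has Strahler number $-1$. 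A technical preliminary is needed because a nested call is made while the caller's level-$s$ counters may be nonzero. Those counters are untouched by $\mathtt{REACH}^{(s-1)}$, which only reads and writes levels $\le s-1$. The zero test at line~\ref{l:zerotest} ensures the auxiliary counters are $\vec 0$ at the start of each call: they are $\vec 0$ initially, and this is re-checked before the next call. So each call behaves as a fresh execution.

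The key structural step is a \emph{spine decomposition} of runs with Strahler number at most $s$. I claim a tree $\alpha$ has $\sn(\alpha)\le s$ iff there is a path from the root to a leaf (the spine) such that every subtree hanging off the spine has Strahler number at most $s-1$. For the forward direction, at each node follow a child of maximal Strahler number. By the definition of $\sn$, if $\sn(\alpha)\le s$ then at most one child has Strahler number $s$, and all others have at most $s-1$. Conversely, if a node's off-spine children all have $\sn \le s-1$, then by induction up the spine each spine node has $\sn\le s$, because the spine child has $\sn\le s$ and all siblings are $\le s-1$. This is the step I expect to need the most care, though it is elementary.

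For the direction \emph{run $\Rightarrow$ execution}, take a run with target $t(\vec x)$ and $\sn\le s$, and fix a spine. Simulate it bottom-up. The spine leaf is an initial configuration, picked at lines~\ref{l:initS}--\ref{l:initE}. Each spine node is handled by one iteration of the outer loop, choosing its rule and taking $j_0$ to be the index of the spine child. For each off-spine child $j$, we use the induction hypothesis to run $\mathtt{REACH}_{p_j}^{(s-1)}$ ending with its counters at that child's counters. We then transfer everything to level $s$, so the zero test passes, and add $\vec a$. Nonnegativity holds throughout because level-$s$ counters only grow during transfers, and after adding $\vec a$ they equal the spine node's counters, which are in $\N^d$. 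Finally the state is $t$, so line~\ref{l:return} is reached.

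For the converse, take an execution reaching the return line and read off a tree. The initial pick gives the leaf. Each outer iteration gives a new root whose children are the current tree, at position $j_0$, and the runs given by the induction hypothesis for each call, with $\sn\le s-1$. The zero test forces each callee's entire output to have been transferred. Hence the level-$s$ counters after the iteration equal $\vec a$ plus the sum of the children's counters. Since VASSnz counters stay in $\N^d$, this is a valid application of the rule. The resulting tree is a run with target $t(\vec x)$, and by the spine characterization its Strahler number is at most $s$.
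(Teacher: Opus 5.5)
Your proof is correct and uses the same simulation as the paper. There is one iteration of the outer while-loop per node along a path of maximal Strahler number, nested calls to $\mathtt{REACH}^{(s-1)}$ for the remaining children, and zero tests that force each callee's output to be transferred completely. The difference is only in organisation: you induct on $s$ with ``at most $s$'' built in, via your spine characterisation. This avoids the paper's exact-Strahler structural induction, its monotonicity fact ($\mathtt{REACH}_t^{(s)}$ return values are $\mathtt{REACH}_t^{(s')}$ return values for $s<s'$), and its surgery on the last loop-condition evaluation, and it spells out the converse that the paper only sketches by induction on $(s,\ell)$.
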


 \begin{proof}
  For each $s \in \N$, let $\Reach_s(\BB)$ denote the set of configurations reachable by a run of $\BB$ whose Strahler number is at most~$s$.
  Recall that by construction $\ZZ_s$ satisfy \cref{fact:construction-Zs}.
  We will also use the following fact, which can easily be proved by induction on $\vec s$.

  \begin{fact} \label{fact:Reach-monotone}
    For all natural numbers $s < s'$, all $\vec x \in \N^d$ and all states $t$, if there is an execution of $\mathtt{REACH}_t^{(s)}$ in which the variables $(\mathtt{c}_1^{(s)},\ldots, \mathtt{c}_d^{(s)})$ hold the value $\vec x$ when executing the return line, then there is also an execution of $\mathtt{REACH}_t^{(s')}$ in which the variables $(\mathtt{c}_1^{(s')},\ldots, \mathtt{c}_d^{(s')})$ hold the value $\vec x$ when executing the return line.
  \end{fact}

  We start by proving the inclusion $\Reach_s(\BB) \subseteq \Sect_{[1,d]}^Q(\ZZ_s)$ for all $s \in \N$.
  Thanks to \cref{fact:construction-Zs}, it is enough to prove by structural induction that for every run $\alpha$ with target $t(\vec x)$, there is an execution of $\mathtt{REACH}_t^{(\sn(\alpha))}$ that reaches the return line with the the variables $(\mathtt{c}_1^{(\sn(\alpha))},\ldots, \mathtt{c}_d^{(\sn(\alpha))})$ holding the value $\vec x$.
  So let $\alpha$ be a run of $\BB$. If $\ra{\alpha} = 0$, then $(\varepsilon, \vec a, q)$ is a rule of $\BB$ and it is easy to see that there is an execution of $\mathtt{REACH}_q^{(0)}$ which never enters the while loop and reaches the return line with the the variables $(\mathtt{c}_1^{0},\ldots, \mathtt{c}_d^{0})$ holding the value $\vec a$.
  We can therefore assume that $\alpha$ has at least one child.
  
  Observe that $\alpha$ has at most one child with same Strahler number.
  If $\alpha$ has a child $\alpha'$ with same Strahler number, then let $p(\vec y)$ denote its target. By induction hypothesis, there is an execution of $\mathtt{REACH}_p^{(\sn(\alpha))}$ that reaches the return line with the the variables $(\mathtt{c}_1^{(\sn(\alpha))},\ldots, \mathtt{c}_d^{(\sn(\alpha))})$ holding the value $\vec y$.
  If $\alpha$ has no child with same Strahler number, then let $\alpha'$ be any of its children, and denote again its target $p(\vec y)$. By induction hypothesis, there is a execution of $\mathtt{REACH}_p^{(\sn(\alpha'))}$ that reaches the return line with the the variables $(\mathtt{c}_1^{(\sn(\alpha'))},\ldots, \mathtt{c}_d^{(\sn(\alpha'))})$ holding the value $\vec y$. Since $\sn(\alpha') < \sn(\alpha)$, we can apply \cref{fact:Reach-monotone}. Hence, in both cases, we have an execution of $\mathtt{REACH}_p^{(\sn(\alpha))}$ that reaches the return line with the the variables $(\mathtt{c}_1^{(\sn(\alpha))},\ldots, \mathtt{c}_d^{(\sn(\alpha))})$ holding the value $\vec y$.

  We now describe how to turn this execution into an execution of $\mathtt{REACH}_t^{(\sn(\alpha))}$ that reaches the return line with the variables $(\mathtt{c}_1^{(\sn(\alpha))},\ldots, \mathtt{c}_d^{(\sn(\alpha))})$ holding the value $\vec x$.
  First, take the prefix of this execution until the last time when the condition of the while loop was evaluated (line \ref{l:transS}). Change the result of this evaluation to true. (Remember that this condition is simply a nondeterministic choice between true or false.) At the next line, pick $\Rule(\alpha)$ and pick $j_0$ such that $\alpha'$ is the $j_0^{\text{th}}$ child of $\alpha$.
  By induction hypothesis combined with \cref{fact:Reach-monotone}, for every other child $\alpha''$ with target $q(\vec z)$, there is an execution of $\mathtt{REACH}_q^{(\sn(\alpha)-1)}$ that reaches the return line with the the variables $(\mathtt{c}_1^{(\sn(\alpha) -1)},\ldots,\mathtt{c}_d^{(\sn(\alpha) -1)} )$ holding the value $\vec z$. We can therefore prolong the execution such that it exits the while loop and reaches line \ref{l:retS} with the variables $(\mathtt{c}_i^{(\sn(\alpha))}, i \in [1,d])$ holding the value $\vec x$. Now, observe that even though the execution we started with is an execution of $\mathtt{REACH}_p^{(\sn(\alpha))}$ instead of $\mathtt{REACH}_t^{(\sn(\alpha))}$, this difference in the state does not matter before reaching line \ref{l:retS}. So let us view our execution as an execution of $\mathtt{REACH}_t^{(\sn(\alpha))}$. Hence, the if-condition at \ref{l:retS} evaluates to true and the return is reached, which finishes the proof of the inclusion $\Reach_s(\BB) \subseteq \Sect_{[1,d]}^Q(\ZZ_s)$ for all $s \in \N$.

  \smallskip

  For the converse inclusion, one can show by induction \footnote{The elements of $\N^2$ are ordered lexicographically.} on $(s, \ell) \in \N^2$ that for every state $t \in Q$, in every execution of $\mathtt{REACH}_t^{(s)}$, after the $\ell^{\text{th}}$ execution of the while loop (if there are at least $\ell$ iterations), the variables $\mathtt{state}^{(s)}$ and  $(\mathtt{c}_1^{(s)},\ldots, \mathtt{c}_d^{(s)})$ hold values $q$ and $\vec x$ such that $q(\vec x) \in \Reach_s(\BB)$.

 \end{proof}

Observe that the existence of a computable upper bound on the Strahler number of a BVASS would imply decidability of the reachability problem for BVASS.
It is still open whether such a bound exists. However, an easy reduction from inclusion of VASS reachability sets already shows that exact computation of the Strahler number is impossible.

\begin{proposition}\label{prop:Uncomputability-Strahler}
    For every $s\in \N$, it is undecidable whether, given a BVASS $\BB$, the Strahler number of $\BB$ is inferior or equal to $s$.
\end{proposition}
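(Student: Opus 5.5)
We will reduce from the inclusion problem for VASS reachability sets, which is undecidable~\cite{HACK197677}. Given two $d$-VASS $\VV_1,\VV_2$, we build a BVASS $\BB$ whose Strahler number is at most $s$ if and only if $\Reach(\VV_1)\subseteq\Reach(\VV_2)$. A VASS is a BVASS with only unary and nullary rules, so all its runs have Strahler number $0$. We first disjointly rename the states so that a fixed state $q$ of interest is shared. We can also assume w.l.o.g. that both VASS have a single target state $f$ of interest; this is achieved by adding zero-effect transitions to fresh copies of $f$.

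The construction is as follows. $\BB$ contains the rules of $\VV_2$, which produce configurations in a state $f_2$ using only runs of Strahler number $0$. $\BB$ also contains a gadget that produces configurations of $\VV_1$, but only by runs of Strahler number exactly $s+1$. To build this gadget, we take a fresh state $z$ with the initial rule $(\varepsilon,\vec 0,z)$ and a binary rule $(zz,\vec 0,z)$. With these, $z(\vec 0)$ is reachable by complete binary trees of every Strahler number. We then restrict by adding a counter-free control: fresh states $z_0,\ldots,z_{s}$, with $(\varepsilon,\vec0,z_0)$ and $(z_iz_i,\vec 0,z_{i+1})$. Here $z_{s}(\vec 0)$ is reached only by runs of Strahler number exactly $s$ (complete binary trees of depth $s$). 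Finally we add a binary rule combining $z_{s}$ with a copy of the initial configurations of $\VV_1$ started by Strahler-$s$ runs. Concretely, the initial rules $(\varepsilon,\vec a,p)$ of $\VV_1$ are replaced by $(z_s z_s,\vec a,p)$, which yields Strahler number $s+1$. The unary rules of $\VV_1$ are kept, followed by a zero-effect rule $f_1\to f_2$. Then every reachable configuration $f_2(\vec x)$ with $\vec x\in\Reach(\VV_1)$ is reached either with Strahler number $s+1$ via the gadget, or with Strahler number $0$ if $\vec x\in\Reach(\VV_2)$.

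It remains to check both directions. The other reachable configurations, namely those in states $z_i$, in states of $\VV_2$, and in states of $\VV_1$, have to be handled. States of $\VV_2$ and $z_i$ are reached with Strahler number at most $s$. Configurations in intermediate $\VV_1$ states, however, necessarily need Strahler number $s+1$. This is the main obstacle: the reduction must make them harmless. We fix it by making $\VV_1$'s intermediate states also reachable cheaply. We apply the inclusion question to the whole reachability sets: we use copies so that every state $p$ of $\VV_1$ is paired with a state of $\VV_2$ via zero-effect rules. We pick the instance so that this pairing is faithful, i.e.\ we reduce from inclusion of reachability sets of VASS sharing the same state set $Q$. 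Hack's undecidability holds for reachability sets with states, or we can encode states into counters. All rules of the $\VV_1$ copy then land in $Q$-states, with $\VV_1$'s unary transitions applied to $Q$-states only after the Strahler-$(s+1)$ initialization. To avoid mixing the copies, we instead keep $\VV_1$'s states primed and add zero-effect rules $p'\to p$. Primed states still have high Strahler number. To repair this, we also let every primed configuration be reachable cheaply by adding unary rules from $Q$ to $Q'$? That would make $\VV_2$ runs continue with $\VV_1$ transitions. So instead we encode states into counters (VASS $\to$ VAS with $d+3$ counters, Hopcroft–Pansiot), where inclusion remains undecidable. Then there is a single state, and the shared state is both the target of $\VV_1$ and $\VV_2$.

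With the single-state encoding, the final check is as follows. If $\Reach(\VV_1)\subseteq\Reach(\VV_2)$, every reachable configuration of $\BB$ is reached with Strahler number at most $s$, so the Strahler number of $\BB$ is at most $s$. Otherwise, some $\vec x\in\Reach(\VV_1)\setminus\Reach(\VV_2)$ is only reachable through the gadget, hence requires Strahler number $s+1$, so the Strahler number of $\BB$ exceeds $s$. The delicate verification is that the single merged state does not let $\VV_2$ transitions be applied after the gadget, or vice versa. We will handle this by keeping separate states $r_1,r_2$ for the two VAS, and computing the Strahler number only of the target state $r$. We adapt the definition by noting that extra states reached only at low Strahler number are harmless. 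The $r_1$ configurations still need $s+1$, so we additionally make $r_1$ reachable from $r_2$ only through the Hack-style construction where $\Reach(\VV_1)\subseteq\Reach(\VV_2)$ is equivalent to equality after adding $\VV_1$'s transitions to $\VV_2$. Ensuring that this equivalence covers all intermediate configurations is the step we expect to require the most care.
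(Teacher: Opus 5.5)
There is a genuine gap, and it is the obstacle you identify yourself but never resolve. By replacing the initial rules $(\varepsilon,\vec a,p)$ of $\VV_1$ with $(z_s z_s,\vec a,p)$, you place the Strahler-$(s+1)$ cost \emph{before} $\VV_1$. Every configuration in a state of the $\VV_1$ copy (primed states, $r_1$, or however you name them) is then reachable as soon as $\VV_1$ has an initial configuration. It is reachable \emph{only} by runs of Strahler number at least $s+1$, and this holds whether or not $\Reach(\VV_1)\subseteq\Reach(\VV_2)$. So your $\BB$ has Strahler number $s+1$ in both cases, and the reduction decides nothing.

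None of the repairs you sketch closes this:
\begin{itemize}
\item Collapsing everything into a single VAS state lets the actions of $\VV_1$ and $\VV_2$ interleave. The cheaply reachable set then becomes the reachability set of $\Delta_1\cup\Delta_2$ from the initial configurations of $\VV_2$. The property you test is no longer the inclusion you reduce from, and you give no argument that it is still undecidable.
\item Computing ``the Strahler number only of the target state'' is not allowed. By definition, the Strahler number of $\BB$ must work for \emph{all} reachable configurations.
\item The final appeal to a Hack-style equality construction is left unverified.
\end{itemize}

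The missing idea is to pay the Strahler cost \emph{after} $\VV_1$, on the transfer of configurations from $\VV_1$ to $\VV_2$. Take $\VV_1,\VV_2$ on a common state set $Q$. Keep $\VV_1$, with its original initial and unary rules, on states $(p,1)$, and keep $\VV_2$ on states $(p,2)$. Keep your $z_0,\ldots,z_s$ gadget, and add for each $p\in Q$ the rule $((p,1)\,z_s\,z_s,\vec 0,(p,2))$. Then:
\begin{itemize}
\item Configurations in states $(p,1)$ are reached by Strahler-$0$ runs, and $z_i(\vec 0)$ needs Strahler number $i\le s$.
\item A configuration $(p,2)(\vec y)$ is reachable iff $p(\vec y)$ is reachable in $\VV_2$ from its initial configurations or from $\Reach(\VV_1)$.
\item Any run containing a transfer node has Strahler number at least $s+1$. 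Hence $(p,2)(\vec y)$ is reachable with Strahler number at most $s$ iff $p(\vec y)\in\Reach(\VV_2)$.
\end{itemize}
Both directions of the equivalence follow.

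The paper does the same thing with a complete binary tree of depth $s+1$. $\VV_1$ runs at the leftmost leaf $0^{s+1}$, all other leaves produce $\vec 0$, and $\VV_2$ runs at the root. The intermediate nodes $(q,0^j)$ with $j\ge 1$ then need only Strahler number $s+1-j\le s$, and only root configurations coming from $\VV_1$ cost $s+1$.
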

\begin{proof}
  We reduce to the inclusion problem for VASS reachability set. This problem is known to be undecidable~\cite{HACK197677,JANCAR-inclusion-VASS-reachset}. Let $\VV_1 \coloneqq (Q, \Delta_1,\vec S_1)$ and $\VV_2 := (Q, \Delta_2,\vec S_2)$ be two VASS of a same dimension $d$ and on a same set of states~$Q$. We define a BVASS $\BB_s \coloneqq (Q_s, \Delta_s)$ which has Strahler number at most $s$ if and only if $\Reach(\VV_1) \subseteq \Reach(\VV_2)$.

    The state set of $\BB_s$ is the product of $Q$ with the nodes of a complete binary tree of depth~$s+1$, that is, $Q_s \coloneqq Q \times Q_{\text{tree}}$ with %
    \[Q_{\text{tree}} := \{ w \mid w \in \{0,1\}^{\le s+1}\}.\]

    (We denote the set of words of length at most $\ell$ on alphabet $\{0,1\}$ by $\{0,1\}^{\le \ell}$ and the empty word by $\varepsilon$.)
    
    Intuitively, we place $\VV_2$ at the root $\varepsilon$, and $\VV_1$ at the leftmost one least $0^{s+1}$.
    All other leaves are constrained to have counters equal to~$0$.
    The branching transition rules of $\BB_s$ mirror the structure of the tree and serve to propagate configurations from $\VV_1$ to $\VV_2$, which requires trees of Strahler number $s+1$. Formally we define:
    \[ \Delta_s \coloneqq \Delta_{\text{init},0} \cup \Delta_{\text{init},1} \cup \Delta_{\text{init},2} \cup \Delta_{\text{tree}} \cup \Delta_1' \cup \Delta_2' \]
    with
    \[
\begin{aligned}
\Delta_{\text{init},0} &\coloneqq 
  \bigl\{\, (\varepsilon, \vec{0}, (q,w)) 
  \;\big|\; q \in Q,\; w \in \{0,1\}^{s+1} \setminus \{0^{s+1}\} \,\bigr\}, \\[4pt]
\Delta_{\text{init},1} &\coloneqq 
  \bigl\{\, (\varepsilon, \vec{s}_1, (q, 0^{s+1})) 
  \;\big|\; q(\vec{s}_1) \in \vec{S}_1 \,\bigr\}, \\[4pt]
\Delta_{\text{init},2} &\coloneqq 
  \bigl\{\, (\varepsilon, \vec{s}_2, (q, \varepsilon)) 
  \;\big|\; q(\vec{s}_2) \in \vec{S}_2 \,\bigr\}, \\[4pt]
\Delta_{\text{tree}} &\coloneqq 
  \Bigl\{\,\Bigl ( (q,w0) (q,w1), \vec{0}, (q,w) \Bigr ) 
  \;\big|\; q \in Q,\; w \in \{0,1\}^s \,\Bigr\}, \\[4pt]
\Delta_1' &\coloneqq 
  \bigl\{\, ((p, 0^{s+1}), \vec{a}_1, (q,0^{s+1})) 
  \;\big|\; (p,\vec{a}_1,q) \in \Delta_1 \,\bigr\}, \\[4pt]
\Delta_2' &\coloneqq 
  \bigl\{\, ((p, \varepsilon), \vec{a}_2, (q,\varepsilon)) 
  \;\big|\; (p,\vec{a}_2,q) \in \Delta_2 \,\bigr\}.
\end{aligned}
\]

$\VV_1$ is simulated by the set of initial configurations $\Delta_{\text{init},1}$ and the set of unary transition rules $\Delta_1'$. Similarly, $\VV_2$ is simulated by $\Delta_{\text{init},2}$ and $\Delta_2'$. The tree structure consists of initial configurations for other leaves ($\Delta_{\text{init},0}$) and by the branching transition rules $\Delta_{\text{tree}}'$.

We now verify that
$\BB_s$ has Strahler number at most $s$ if and only if $\Reach(\VV_1) \subseteq \Reach(\VV_2)$.

If $\Reach(\VV_1) \not \subseteq \Reach(\VV_2)$, then let $q(\vec x) \in \Reach(\VV_1) \backslash \Reach(\VV_2)$. Then in $\BB_s$ the configuration $(q, \varepsilon) (\vec x)$ can only be reached by runs of Strahler number $s+1$. (Intuitively, we need to produce $q(\vec x)$ in the simulation of $\VV_1$ and propagate it to the root. The propagation requires Strahler number $s+1$.)

If $\Reach(\VV_1) \subseteq \Reach(\VV_2)$, then the set of configurations $q(\vec x)$ of $\VV_2$ such that $(q, \varepsilon) (\vec x) \in \Reach(\BB_s)$ is equal to $\Reach(\VV_2)$. Therefore, every configuration in $\Reach(\BB_s)$ with a state of the form $(q, \varepsilon)$ for some $q \in Q$ is the target of a run of Strahler number 0. Runs whose target have a different state are easily seen to have Strahler number at most $s$. We conclude that the Strahler number of $\BB_s$ is at most $s$, which concludes the proof.
\end{proof}

\section{Application to $5$-BVAS and $2$-BVASS} \label{sec:5BVAS-algo-iteratif}

A $d$-BVAS (resp. a $d$-VAS) is a $d$-BVASS (resp. a $d$-VASS) with one state. By discarding this unique state, formally, a $d$-BVAS is given as a finite set of \emph{rules} $\Delta \subseteq \N\times\Z^d$ where a rule $\delta=(r,\vec{a})$ is defined by a natural number $r$ denoting the arity of $\delta$ and a vector $\vec{a}$ denoting the effect of $\delta$. Symmetrically, a $d$-VAS is a pair $(\Delta,\vec{S})$ where $\Delta$ is a finite subset of vectors in $\Z^d$ called \emph{actions} and $\vec{S}$ is a finite set of initial configurations in $\N^d$. Based on this observation, notions defined for $d$-BVASS and $d$-VASS are naturally propagated over $d$-BVAS and $d$-VAS.

\smallskip

Hopcroft and Pansiot proved in~\cite{HP79} that reachability sets of $5$-VAS are effectively semilinear while there exists a $3$-VASS with two states having a non-semilinear reachability set. Since reachability sets of $d$-VASS are sections of $d+3$-VAS reachability sets (see for instance~\cite[lemma 2.1]{HP79}), there also exists a $6$-VAS with a non-semilinear reachability set. In this section, we push those results on branching VAS and VASS by proving that reachability sets of $5$-BVAS and $2$-BVASS are effectively semilinear. The computation is performed by a fix-point algorithm. We only present the computation for $5$-BVAS since for $2$-BVASS the computation is similar (it is based on the same fix-point algorithm but instantiated with $2$-BVASS). Our computation for $5$-BVAS is based on an extended model of $5$-VAS, called $5$-LVAS that allow linear actions.

\smallskip

Formally, a \emph{$d$-dimensional linear VAS} (\emph{$d$-LVAS} for short) is a pair $\VV=(\Delta,\Gamma_{init})$ where $\Delta$ is a finite set of pairs $(\vec{z},\vec{V})$, called \emph{transitions}, where $\vec{z}\in\Z^d$ and $\vec{V}$ is a finite subset of $\N^d$, and $\Gamma_{init}$ is a presentation of a semilinear set of $\N^d$. We associate with a pair $\delta=(\vec{z},\vec{V})$ in $\Delta$ the set $\sem{\delta}=\vec{z}+\vec{P}$ where $\vec{P}$ is the periodic set spanned by $\vec{V}$. We also introduce $\sem{\Delta}=\bigcup_{\delta\in \Delta}\sem{\delta}$. A \emph{configuration} is a vector in $\N^d$ and a \emph{run} is a sequence $\rho=\vec{x}_0\vec{x}_1\ldots\vec{x}_k$ of configurations satisfying $\vec{x}_0\in\sem{\Gamma_{init}}$, and $\vec{x}_j-\vec{x}_{j-1}\in \sem{\Delta}$ for every $1\leq j\leq k$. Its last configuration is called \emph{target} and is denoted $\target(\rho)$. The set of runs of $\VV$ is denoted $\Runs(\VV)$. The \emph{reachability set} $\Reach(\VV)$ is the set of targets of runs of $\VV$, \emph{i.e.} $\Reach(\VV) = \{\target(\rho) \mid \rho \in \Runs(\VV) \}$.

\smallskip

Reachability sets of $d$-VAS are clearly reachability sets of $d$-LVAS. Conversely, reachability sets of $d$-LVAS are sections of reachability sets of $d$-VASS. In fact, any transition $\delta=(\vec{z},\vec{V})$ of a $d$-LVAS can be encoded in a $d$-VASS by two transitions $(q_0,\vec{z},r)$ and $(r,\vec{0},q_0)$ and loops $(r,\vec{v},r)$. Such a construction cannot be used for proving that reachability sets of $5$-LVAS are effectively semilinear since it introduces several control states. Nevertheless, in the following lemma, by decomposing runs of a $5$-LVAS with respect to the first time a transition $(\vec{z},\vec{V})$ with $\vec{V}\not=\emptyset$ is used, we provide a way to prove that reachability sets of $5$-LVAS are effectively semilinear.
\begin{lemma}\label{lem:poststar}
   Reachability sets of $5$-LVAS are effectively semilinear.
\end{lemma}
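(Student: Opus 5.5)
We reduce to the theorem of Hopcroft and Pansiot that reachability sets of $5$-VAS are effectively semilinear, arguing by induction on the number $m$ of transitions $(\vec z,\vec V)\in\Delta$ with $\vec V\neq\emptyset$.

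\textbf{Reducing the initial set.} First we replace the semilinear initial set $\sem{\Gamma_{init}}$ by finitely many initial vectors. Write $\sem{\Gamma_{init}}$ as a finite union of linear sets $\vec b+\vec P$ with $\vec P$ spanned by a finite set $\vec W$. Reachability sets are unions over the linear components, so it suffices to treat one component. For that component, take $\vec b$ as the unique initial vector and add a new transition $(\vec 0,\vec W)$. Every vector of $\vec b+\vec P$ is reached in one step from $\vec b$. Conversely, since the periods lie in $\N^d$, the new transition only adds nonnegative vectors, so this step is always enabled. Hence the new system has exactly the same reachability set, at the cost of one more linear transition.

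\textbf{Base case.} If every transition has $\vec V=\emptyset$, the system is a $5$-VAS with finitely many initial configurations. Its reachability set is then effectively semilinear by~\cite{HP79}.

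\textbf{Inductive step.} Otherwise, pick a transition $\delta=(\vec z,\vec V)$ with $\vec V\neq\emptyset$. We decompose runs according to the \emph{first} use of a linear transition. The key observation is that a step $\vec x\to\vec x+\vec z+\vec p$ with $\vec p\in\vec P$ is enabled as soon as $\vec x+\vec z\geq\vec 0$. This is because $\vec p\geq\vec 0$, so the set of configurations obtained after the step is $\bigl((\vec X+\vec z)\cap\N^5\bigr)+\vec P$, where $\vec X$ is the set of configurations before it.

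Let $\VV_0$ be the $5$-VAS consisting of the non-linear transitions together with the initial vector. By the base case, $\vec R_0:=\Reach(\VV_0)$ is computable as a semilinear set. The configurations reached just after the first linear step form the set $\vec S_1=\bigcup_{(\vec z,\vec V)}\bigl((\vec R_0+\vec z)\cap\N^5\bigr)+\vec P$, which is effectively semilinear by the effective closure properties of semilinear sets (Presburger arithmetic). From $\vec S_1$ the run continues in the full LVAS, so
\[\Reach(\VV)=\vec R_0\cup\Reach\bigl((\Delta,\Gamma_1)\bigr)\]
with $\Gamma_1$ a presentation of $\vec S_1$. This does not decrease $m$, so we need a further idea.

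\textbf{Main obstacle and the extra idea.} Plain recursion fails because $m$ does not decrease: the continuation is again the full LVAS. The remedy is to exploit that once we are in $\vec S_1$ we already hold the periods $\vec P$ of the transition used first. Since $\vec P$ is a submonoid and $\vec p\geq\vec 0$ never disables anything, any later application of the same $\delta$ can be rearranged so that its period part is added at the first use. Formally, $\vec x+\vec p_1\to\cdots\to\vec y+\vec z+\vec p_2$ can be rewritten as $\vec x+\vec p_1+\vec p_2\to\cdots\to\vec y+\vec z$, and monotonicity preserves all intermediate steps. After this rearrangement, in runs starting from $\vec S_1$ via $\delta$, the transition $\delta$ can be replaced by the plain VAS action $\vec z$.

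Iterating this over the linear transitions gives a bounded recursion of depth $m$. At each level the initial set is effectively semilinear, which we reduce to finitely many initial vectors as in the first paragraph. The delicate point is that this reduction reintroduces a linear transition $(\vec 0,\vec W)$, and we must check that it too can be absorbed. We expect to handle it by applying the same rearrangement argument, noting that $(\vec 0,\vec W)$ is used only at the start, so that each level strictly decreases the number of linear transitions that remain "unused".

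Checking this bookkeeping, and making sure the whole procedure stays within dimension $5$ (no control states and no extra counters are added), is where we expect the real difficulty. The final answer is a finite union of effectively computed semilinear sets.
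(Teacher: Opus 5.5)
Your central idea (cut a run at the first use of a wide transition $w=(\vec z,\vec V)$, and move the period parts of all later uses of $w$ onto that first use so the later uses become the plain action $\vec z$) is exactly the paper's. However, two steps fail as written.

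\textbf{The post-image is wrong.} Your ``key observation'' misreads the semantics. A run only requires each configuration to lie in $\N^5$, so a step $\vec x\to\vec x+\vec z+\vec p$ needs $\vec x+\vec z+\vec p\geq\vec 0$, not $\vec x+\vec z\geq\vec 0$. The post-image of $\vec X$ under $(\vec z,\vec V)$ is therefore $\N^5\cap(\vec X+\vec z+\vec P)$. This can strictly contain your $((\vec X+\vec z)\cap\N^5)+\vec P$, and then $\Reach(\VV)=\vec R_0\cup\Reach((\Delta,\Gamma_1))$ is false. For example, take initial set $\{\vec 0\}$ and the single transition $(-\vec e_1,\{\vec e_1\})$. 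The reachability set is $\N\vec e_1$, but your $\vec S_1$ is empty and you only get $\{\vec 0\}$. The repair is to use $\N^5\cap(\vec R_0+\vec z+\vec P)$, which is still effectively semilinear~\cite{gs66}. Your rearrangement never needs $\vec x+\vec z\geq\vec 0$, since moving nonnegative periods earlier only increases intermediate configurations.

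\textbf{The recursion has no decreasing measure.} This is the point you leave open. Replacing each semilinear initial set by one vector plus a transition $(\vec 0,\vec W)$ reinserts a wide transition at every level. So $m$ never drops, and your base case (no wide transitions, finitely many initial vectors) is never reached. The missing observation is that semilinear initial sets are harmless. Suppose no transition is wide and the initial set is $\vec x+\vec P$ with periods $\vec V$. Then the reachability set is that of the $5$-VAS with actions $\{\vec b\mid(\vec b,\emptyset)\in\Delta\}\cup\vec V$ from $\vec x$, because nonnegative period actions can always be fired first. Equivalently, $(\vec 0,\vec W)$ is just the plain actions $\vec W$ and should not count as wide. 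So you should induct on the number of wide transitions for LVAS with \emph{arbitrary semilinear} initial sets, reducing initial sets only in the base case.

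\textbf{The inductive step must be split per wide transition.} For each $w$, let $\VV_w$ be $\Delta$ with $w$ replaced by $(\vec z,\emptyset)$, with initial set $\N^5\cap(\vec R_0+\vec z+\vec P)$. It has one fewer wide transition, and your rearrangement gives $\Reach(\VV)=\vec R_0\cup\bigcup_w\Reach(\VV_w)$. A single union $\vec S_1$ over all linear transitions, followed by the full $\Delta$ as you wrote it, does not let you make any particular transition plain.
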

\begin{proof}
  A transition $\delta=(\vec{z},\vec{V})$ of a $5$-LVAS is said to be \emph{wide} if its set of periods $\vec{V}$ is non-empty. Let us first explain why reachability sets of $5$-LVAS without wide transition is effectively semilinear. To do so, we consider such a $5$-LVAS $\VV=(\Delta,\Gamma_{init})$. Observe that the reachability set of $\VV$ is equal to the finite union of the reachability sets of the $5$-LVAS $(\Delta,\{\gamma\})$ where $\gamma$ ranges over $\Gamma_{init}$. So, we can assume without loss of generality that $\Gamma_{init}$ is reduced to a single $\gamma=(\vec{x},\vec{V})$. Now, observe that the reachability set of such a $5$-LVAS is equal to the reachability set of the $5$-VAS $(\{\vec{b} \mid (\vec{b},\emptyset)\in \Delta\}\cup \vec{V},\{\vec{x}\})$. It follows that this reachability set is effectively semilinear from~\cite{HP79}.

  Now, let us prove by induction over $k\in\N$ that reachability sets of $5$-LVAS with $k$ wide transitions are effectively semilinear. The rank $k=0$ is proved in the previous paragraph. Assume the rank $k$ proved and let us consider a $5$-LVAS $\VV=(\Delta,\Gamma_{init})$ with $k+1$ wide transitions. We denote by $W$ the set of wide transitions and we introduce the $5$-LVAS $\VV_{init}$ obtained from $\VV$ by removing the wide transitions. From the previous paragraph, the reachability set of $\VV_{init}$ is effectively semilinear. It follows that we can compute a presentation $\Gamma_{init}^*$ of the reachability set of $\VV_{init}$. We associate with each wide transition $w=(\vec{z},\vec{S})$ of $W$, the set $\vec{S}_{w}=\N^d\cap (\sem{\Gamma_{init}^*}+\sem{w})$. From \cite{gs66}, this set is semilinear and we can compute a presentation $\Gamma_w$ of it. Now, let us consider the $5$-LVAS $\VV_w=(\Delta_w,\Gamma_w)$ obtained from $\VV$ by replacing the transition $w$ by $(\vec{z},\emptyset)$ and by replacing $\Gamma_{init}$ by $\Gamma_w$. By induction, we deduce that the reachability set of that $5$-LVAS is effectively semilinear. It follows that we can compute a presentation $\Gamma_w^*$ of that semilinear set. We are going to prove that $\Gamma^*$ defined as $\Gamma_{init}^*\cup \bigcup_{w\in W}\Gamma_w^*$ is a presentation of the reachability set of $\VV$.

  Clearly configurations in the semilinear set $\sem{\Gamma_{init}^*}$ are reachable for $\VV$. Let $w\in W$. Notice that configurations in the semilinear set $\sem{\Gamma_w}$ are reachable for $\VV$ since they can be obtained from configurations in $\sem{\Gamma_{init}^*}$ by executing the transition $w$. We deduce that configurations in the semilinear set $\Gamma_w^*$ are also reachable for $\VV$. It follows that configurations in $\sem{\Gamma^*}$ are reachable for $\VV$.

  Conversely, let us prove that reachable configurations of $\VV$ are in the semilinear set $\sem{\Gamma^*}$. So, let us consider a run $\rho=\vec{x}_0\vec{x}_1\ldots\vec{x}_k$ of $\VV$ and let us prove that $\vec{x}_k\in \sem{\Gamma^*}$. There exists a sequence $\delta_1,\ldots,\delta_k\in \Delta$ such that $\vec{x}_j-\vec{x}_{j-1}\in\sem{\delta_j}$ for every $1\leq j\leq k$. If $\delta_1,\ldots,\delta_k\in \Delta\setminus W$, we deduce that $\vec{x}_k\in \sem{\Gamma_{init}^*}$ and we are done. So, we can assume that there exists a minimal $i\in\{1,\ldots,k\}$ such that $\delta_i\in W$. By minimality of $i$, it follows that $\vec{x}_{i-1}\in\sem{\Gamma_{init}^*}$. We denote by $w=(\vec{z},\vec{V})$ the transition $\delta_i$ and by $\vec{P}$ the periodic set spanned by $\vec{V}$. We introduce the set $J$ of occurrences of $w$, i.e. $J=\{j\in\{i,\ldots,k\} \mid \delta_j=w\}$. For each $j\in J$, there exists $\vec{p}_j\in \vec{P}$ such that $\vec{x}_{j}=\vec{x}_{j-1}+\vec{z}+\vec{p}_j$. We introduce the sequence $\rho'=\vec{x}_i'\vec{x}'_{i+1}\ldots\vec{x}'_k$ where $\vec{x}_\ell'=\vec{x}_\ell+\sum_{j\in J\cap \{\ell+1,\ldots,k\}}\vec{p}_j$. Let us prove that $\rho'$ is a run of $\VV_w$.

  Since $\vec{x}_i=\vec{x}_{i-1}+\vec{z}+\vec{p}_i$, we deduce that $\vec{x}'_i=\vec{x}_{i-1}+\vec{z}+\sum_{j\in J}\vec{p}_j$. As $\vec{x}_{i-1}\in\sem{\Gamma_{init}}$, we get $\vec{x}'_i\in \sem{\Gamma_w}$.
  
  Now, let $\ell\in\{i+1,\ldots,k\}$ and let us prove that $\vec{x}_\ell'-\vec{x}_{\ell-1}'\in \sem{\Delta_W}$. If $\ell\not\in J$ this property is immediate since $\vec{x}_\ell'-\vec{x}_{\ell-1}'=\vec{x}_\ell-\vec{x}_{\ell-1}\in \sem{\delta_\ell}$. Since $\delta_\ell\in \Delta_w$, if follows that $\vec{x}_\ell'-\vec{x}_{\ell-1}'\in\sem{\Delta_w}$. If $\ell\in J$ then $\vec{x}_\ell'-\vec{x}_{\ell-1}'=\vec{x}_\ell-\vec{x}_{\ell-1}-\vec{p}_\ell=\vec{z}$. Since $(\vec{z},\emptyset)\in \Delta_w$, we deduce that $\vec{x}_\ell'-\vec{x}_{\ell-1}'\in\sem{\Delta_w}$ also in that case.

  We have proved that $\rho'$ is a run of $\VV_w$. Since $\target(\rho')=\vec{x}'_k=\vec{x}_k$ we have proved that $\vec{x}_k\in \sem{\Gamma_w}$.
\end{proof}

\newcommand{\instantiate}[2]{#1\left<#2\right>}
\newcommand{\poststar}{\textsc{poststar}}

We associate with a $d$-BVAS $\Delta$ and a presentation $\Gamma$ of a semilinear set $\vec{S}\subseteq\N^d$, the $d$-LVAS $\instantiate\Delta\Gamma=(T,\{\vec{a}\in\N^d \mid (0,\vec{a})\in\Delta\})$ where $T$ is the following set of transitions:
$$T=\bigcup_{(r,\vec{a})\in \Delta\mid r\geq 1}\{(\vec{a}+\sum_{\ell=1}^{r-1}\vec{b}_\ell,\bigcup_{\ell=1}^{r-1}\vec{V}_\ell) \mid (\vec{b}_1,\vec{V}_1),\ldots,(\vec{b}_{r-1},\vec{V}_{r-1})\in\Gamma\}$$
Clearly, if $\vec{S}\subseteq \Reach(\Delta)$ then $\Reach(\instantiate\Delta\Gamma)\subseteq \Reach(\Delta)$. In fact, from a run $\rho$ of $\instantiate\Delta\Gamma$ we obtain a run $\alpha$ of $\Delta$ with $\target(\rho)=\target(\alpha)$ just by considering $\rho$ as a (single branch) tree and by inserting as new children of nodes of that branch some runs $\alpha_{\vec{s}}$ of $\Delta$ such that $\target(\rho_{\vec{s}})=\vec{s}$ where $\vec{s}$ ranges over $\sem{\Gamma}$. 

\smallskip

Computation of $5$-BVAS reachability sets is based on the previous construction. We introduce a computable function $\poststar$ that maps a $5$-LVAS $\VV$ on a presentation $\poststar(\VV)$ of the semilinear set $\Reach(\VV)$. \cref{lem:poststar} shows that such a function exists. We introduce the sequence $(\Gamma_i)_{i\in\N}$ of presentations of semilinear sets defined by $\Gamma_0=\emptyset$ and by induction for every $i\in\N$ as follows:
$$\Gamma_{i+1}=\poststar(\instantiate\Delta{\Gamma_i})$$
Let $\vec{S}_i$ be the semilinear set presented by $\Gamma_i$. By induction on $i$ we deduce that $\vec{S}_i\subseteq \vec{S}_{i+1}$ and $\vec{S}_i\subseteq \Reach(\Delta)$. Notice that if $\vec{S}_{i+1}=\vec{S}_i$ then $\vec{S}_i$ is an inductive invariant for $\Delta$. It follows that $\vec{S}_i=\Reach(\Delta)$ for such an index $i$. As we can decide the inclusion of two semilinear sets given by presentations, we deduce that the reachability set of $5$-BVAS are effectively semilinear if such an index $i$ exists. We first prove the following lemma.
\begin{lemma}\label{lem:Si}
   $\vec{S}_i$ is the set of configurations reachable by a run of $\Delta$ of Strahler number at most $i$.
\end{lemma}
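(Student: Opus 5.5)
We will prove the statement by induction on $i$, working throughout with the sets $\Reach_s(\Delta)$ of configurations reachable by runs of Strahler number at most $s$, as in the proof of \cref{lem:construction-VASSnz}. One point of indexing needs settling first. Since $\Gamma_0=\emptyset$, the LVAS $\instantiate\Delta{\Gamma_0}$ has only the unary rules as transitions, and its initial set is given by the nullary rules. Hence $\vec S_1$ consists of the targets of unary chains, which are exactly the runs of Strahler number $0$. So the statement should be read with the offset $\vec S_{i+1}=\Reach_i(\Delta)$ and $\vec S_0=\emptyset=\Reach_{-1}(\Delta)$. We prove the statement in this form; if the intended convention assigns Strahler number $1$ to leaves, the two readings coincide. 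The base case $i=0$ is immediate.

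The core of the argument is a \emph{spine characterisation} of bounded Strahler number. A tree $\alpha$ has $\sn(\alpha)\le s$ if and only if $\alpha$ contains a root-to-leaf branch (the spine) such that every child hanging off the spine has Strahler number at most $s-1$.

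For the forward direction, build the spine top-down. At each node on the spine, the definition of $\sn$ guarantees that at most one child has Strahler number equal to the current value $\le s$; all other children are strictly smaller. Descend into that child if it exists, and otherwise into an arbitrary child.

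For the converse, use induction from the leaf upwards along the spine. At each spine node, the spine child has Strahler number $\le s$ and every other child has Strahler number $\le s-1$, so the maximum value $s$ can occur at most once and the node has Strahler number $\le s$.

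In a BVAS there is a single state, so rules depend only on the arity and the effect, and the position of the spine child among the siblings is irrelevant. Consequently, a run of Strahler number $\le s$ is the same thing as a sequence $\vec x_0\vec x_1\cdots\vec x_k$ with the following properties:
\begin{itemize}
\item $\vec x_0=\vec a$ for some $(0,\vec a)\in\Delta$;
\item for each $j$, $\vec x_j=\vec x_{j-1}+\vec a+\sum_{\ell=1}^{r-1}\vec y_\ell$ for some $(r,\vec a)\in\Delta$ with $r\ge1$ and $\vec y_\ell\in\Reach_{s-1}(\Delta)$;
\item all $\vec x_j\in\N^d$.
\end{itemize}

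It remains to match this description with the runs of $\instantiate\Delta{\Gamma_s}$, using the induction hypothesis $\sem{\Gamma_s}=\vec S_s=\Reach_{s-1}(\Delta)$. The key identity concerns the transitions of the LVAS. For a transition $(\vec a+\sum_\ell\vec b_\ell,\bigcup_\ell\vec V_\ell)$, the periodic set spanned by a union of period sets equals the sum of the individual periodic sets, so
\[
\sem{(\vec a+\textstyle\sum_\ell\vec b_\ell,\bigcup_\ell\vec V_\ell)}=\vec a+\textstyle\sum_{\ell=1}^{r-1}(\vec b_\ell+\vec P_\ell).
\]
Choosing the same linear presentation twice causes no problem, because $\vec P+\vec P=\vec P$. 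Taking the union over all choices of $(\vec b_\ell,\vec V_\ell)\in\Gamma_s$ gives
\[
\sem{T}=\bigcup_{(r,\vec a)\in\Delta,\ r\ge1}\Bigl(\vec a+\textstyle\sum_{\ell=1}^{r-1}\vec S_s\Bigr).
\]

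With this identity, the runs of $\instantiate\Delta{\Gamma_s}$ are exactly the spine sequences described above. The initial configurations agree, the steps agree, and both sides require all intermediate configurations to lie in $\N^d$. Since $\poststar$ returns a presentation of the reachability set, we obtain $\vec S_{s+1}=\Reach(\instantiate\Delta{\Gamma_s})=\Reach_s(\Delta)$.

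The step needing the most care is the spine characterisation, in particular ensuring that it holds in both directions. The periodic-set identity is the only algebraic point, and it is routine.
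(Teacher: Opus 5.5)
Your proof is correct and follows essentially the paper's route. The paper's inductive step also matches runs of $\instantiate\Delta{\Gamma_i}$ with runs of $\Delta$ by peeling off the root and keeping the (at most one) child of maximal Strahler number as the continuation, using inner inductions on the LVAS run length and on the BVAS run size; this is your spine characterisation unrolled.

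Your index shift is also warranted. Under the paper's convention that leaves have Strahler number $0$, we have $\vec S_0=\emptyset$, while unary chains from nullary rules have Strahler number $0$. So the lemma as printed, and its ``trivial'' base case, is off by one, and the correct statement is $\vec S_{i+1}=\Reach_i(\Delta)$. This is all the subsequent fix-point argument needs.
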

\begin{proof}
   We denote by $\VV_i$ the $5$-LVAS $\instantiate\Delta{\Gamma_i}$. Notice that $\VV_i=(T_i,\vec{S}_{init})$ for some finite set $T_i$ of actions and some finite set $\vec{S}_{init}=\{\vec{a}\in\N^d \mid (0,\vec{a})\in\Delta\}$ independent of $i$.

   We prove the lemma by induction on $i$. The rank $i=0$ is trivial. Assume the rank $i$ proved.

  Let us prove by induction on $k$ that every run $\vec{x}_0\ldots\vec{x}_k$ of $\instantiate\Delta{\Gamma_i}$ is such that $\vec{x}_k$ is a configuration reachable by a run of $\Delta$ of Strahler number at most $i+1$. When $k=0$ the proof is trivial since $\vec{x}_0\in \vec{S}_{init}$. Assume the rank $k$ proved and let us consider a run $\vec{x}_0\ldots\vec{x}_{k+1}$ of $\instantiate\Delta{\Gamma_i}$. By induction, we deduce that $\vec{x}_k$ is reachable by a run $\rho$ of $\Delta$ of Strahler number at most $i+1$. As $\vec{x}_{k+1}-\vec{x}_k\in \sem{T_i}$, there exists a rule $\delta=(r,\vec{a})\in\Delta$ and a sequence $(\vec{s}_{\ell})_{1\leq \ell< r}$ of configurations in $\vec{S}_i$ such that $\vec{x}_{k+1}-\vec{x}_{k}=\vec{a}+\sum_{\ell=1}^{r-1}\vec{s}_\ell$. By induction, there exists a run $\rho_\ell$ of $\Delta$ of Strahler number at most $i$ that reaches $\vec{s}_\ell$. Let us consider the tree $\rho'$ with a root labelled by the configuration $\vec{x}_{k+1}$ and with children $\rho_1,\ldots,\rho_{r-1},\rho$. Notice that $\rho'$ is a run of $\Delta$ of Strahler number at most $i+1$. We have proved the induction on $k$. It follows that every configuration in $\vec{S}_{i+1}$ is reachable by a run of $\Delta$ of Strahler number at most $i+1$.

  Now, let us prove by induction on $k$ that every configuration reachable by a run of $\Delta$ of Strahler number at most $i+1$ and of size $k$ is in $\vec{S}_{i+1}$. When $k=0$ the proof is trivial since there is no run of size $k$. Assume the rank $k$ proved an let us consider a run $\rho$ of $\Delta$ of Strahler number at most $i+1$ and of size $k+1$ that reaches a configuration $\vec{s}$. Let $\rho_1,\ldots,\rho_r$ be the runs of $\Delta$ corresponding to the children of the root of $\rho$, and let $\vec{s}_1,\ldots,\vec{s}_r$ the configurations reached by those runs. Since $\rho$ is a run, there exists $\vec{a}\in\Z^d$ such that $(r,\vec{a})\in\Delta$ and such that $\vec{s}=\vec{a}+\sum_{\ell=1}^r\vec{s}_\ell$. If $r=0$ then $\vec{s}\in\vec{S}_{init}$ and we deduce that $\vec{s}\in\vec{S}_{i+1}$. So, we can assume that $r\geq 1$. Since the Strahler number of $\rho$ is at most $i+1$, by reordering $\rho_1,\ldots,\rho_r$ we can assume without loss of generality that the Strahler number of $\rho_\ell$ is at most $i$ for every $1\leq \ell<r$. By induction on $i$, we deduce that the configuration $\vec{s}_\ell$ is in $\vec{S}_i$. Moreover, as the Strahler number $\rho_r$ is at most $i+1$ and the size of $\rho_r$ is at most $k$, by induction on $k$, we deduce that $\vec{s}_r\in\vec{S}_{i+1}$. As $\vec{s}_\ell\in \vec{S}_i$, there exists $\gamma_\ell=(\vec{b}_\ell,\vec{V}_\ell)\in \Gamma_i$ such that $\vec{s}_\ell\in\sem{\gamma_\ell}$. Let us consider $t=(\vec{a}+\sum_{\ell=1}^{r-1}\vec{b}_\ell,\bigcup_{\ell=1}^{r-1}\vec{V}_\ell)$. Observe that $t$ is a transition in $T_i$. Moreover, $t$ can be executed from $\vec{s}_r$ and provides $\vec{s}$. From $\vec{s}_r\in\vec{S}_{i+1}$ we get $\vec{s}\in\vec{S}_{i+1}$. We have proved the induction on $k$. It follows that that every configuration reachable by a run of $\Delta$ of Strahler number at most $i+1$ is in $\vec{S}_{i+1}$.

  So, we have proved the induction on $i$, i.e. we have proved the lemma.
\end{proof}

\cref{cor:bounded-Strahler} shows that there exists $s\in \N$ such that every configuration reached by $\Delta$ is reached by a run whose Strahler number is at most $s$. From the previous lemma, we deduce that for every $i\geq s$ we have $\vec{S}_{i+1}=\vec{S}_i$. We have proved the following theorem.
\begin{theorem}
  Reachability sets of $5$-BVAS and $2$-BVASS are effectively semilinear.
\end{theorem}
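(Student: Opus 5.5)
For $5$-BVAS, the preceding discussion already assembles almost everything, so I will make it explicit. Given a $5$-BVAS $\Delta$, I compute the sequence $\Gamma_0=\emptyset$, $\Gamma_{i+1}=\poststar(\instantiate\Delta{\Gamma_i})$. Each step is effective because $\instantiate\Delta{\Gamma_i}$ is a $5$-LVAS built syntactically from $\Delta$ and the finite presentation $\Gamma_i$, and \cref{lem:poststar} provides $\poststar$. After each step I test whether $\sem{\Gamma_{i+1}}\subseteq\sem{\Gamma_i}$; inclusion of presented semilinear sets is decidable by Presburger arithmetic. If the test succeeds I output $\Gamma_i$.

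For correctness: if $\vec S_{i+1}=\vec S_i$, then $\vec S_i$ is an inductive invariant. Indeed, applying any rule $(r,\vec a)$ with $r\ge1$ to elements of $\vec S_i$ is simulated by one transition of $\instantiate\Delta{\Gamma_i}$ from an element of $\vec S_i\subseteq\vec S_{i+1}$. For $r=0$ the result lies in the initial set. Hence $\Reach(\Delta)\subseteq\vec S_i$. The reverse inclusion follows from \cref{lem:Si}.

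For termination: \cref{cor:bounded-Strahler} gives $s$ such that every reachable configuration is reached by a run of Strahler number at most $s$. \cref{lem:Si} then gives $\vec S_s=\vec S_{s+1}=\Reach(\Delta)$, so the loop stops by iteration $s$. We never need to know $s$; we only need it to exist.

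For $2$-BVASS I would follow the same scheme with states. I define a $2$-dimensional linear VASS with states $Q$: for each rule $(p_1\cdots p_k,\vec a,q)$ and each choice of $j_0$, there is a transition from $p_{j_0}$ to $q$ with offset $\vec a+\sum_{j\ne j_0}\vec b_j$ and periods $\bigcup_{j\ne j_0}\vec V_j$, where $(\vec b_j,\vec V_j)$ ranges over the linear components of the current semilinear presentation restricted to state $p_j$. The analogue of \cref{lem:Si}, "Strahler $\le i$", goes through verbatim, with the unique child of maximal Strahler number playing the role of the spine.

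The main obstacle is the analogue of \cref{lem:poststar}: showing that reachability sets of $2$-dimensional linear VASS are effectively semilinear. I would repeat the induction on the number of wide transitions, but the "first wide transition" decomposition breaks once states are present. After a wide transition is fired, its periods can be added only at that point of the control flow, so they cannot be pushed into the initial set unconditionally. The construction of $\VV_w$ must therefore record the current state. A cleaner route is to encode each wide transition with an extra state $r$ and loops $(r,\vec v,r)$. This stays within $2$-VASS, whose reachability sets are effectively semilinear by Hopcroft–Pansiot, since adding control states costs nothing in dimension $2$ for VASS. I would try that encoding first.
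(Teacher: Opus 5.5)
Your proposal is correct and follows the paper's route. For $5$-BVAS it is exactly the paper's argument: fix-point iteration of $\poststar$ on $\instantiate\Delta{\Gamma_i}$, correctness via inductive invariance and \cref{lem:Si}, and termination via \cref{cor:bounded-Strahler}. For $2$-BVASS, where the paper only says the computation is similar, your state-indexed instantiation plus an extra-state encoding of linear transitions into a $2$-VASS fills in the details. That $2$-VASS is effectively semilinear by Hopcroft--Pansiot whatever its number of states.

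There is one point to get right in that encoding. Fire the period loops before the offset: use $(p,\vec 0,r)$, then loops $(r,\vec v,r)$, then $(r,\vec z,q)$. Do not apply $\vec z$ on entering $r$, as in the encoding sketched before \cref{lem:poststar}. Otherwise the VASS wrongly requires $\vec x+\vec z\geq\vec 0$. A branching rule whose negative effect is compensated by the off-spine children would then be missed, and the stabilised set need not be an inductive invariant.
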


\section{Conclusion} \label{sec:conclusion}
In this paper, we showed how to bridge the gap between VASS and branching VASS. In small dimensions, we generalized the effective semilinearity of reachability sets of $2$-dimensional VASS and $5$-dimensional VAS to the branching case, via a simple iterative fix-point algorithm. We obtained those results by applying techniques that are valid in arbitrary dimension.

We introduced a well-quasi-order (wqo) on the set of runs of BVASS that generalizes the well-known wqo on VASS runs and that satisfies, as in the case of VASS, an \emph{amalgamation property}. Using this property, we have proved that every BVASS can be extended with an appropriate finite set $\vec F$ of reachable configurations to get an equivalent BVASS with bounded branching depth complexity. This result has three consequences. First, reachability sets of BVASS are VASS sections, and hence, are almost semilinear. Second, if the reachability set of a BVASS is semilinear then it is effectively computable. Third, the Strahler number of a BVASS is finite, but not computable. Whether such an appropriate set $\vec F$ is computable is an open problem. Similarly, we do not know if an upper bound on the Strahler number is computable. Finally, we proved that the Strahler-bounded reachability problem is decidable, hence the computability of such an upper-bound would entail the decidability of the reachability problem for BVASS. The latter problem is still open, but hopefully not for a long time!

\bibliographystyle{alphaurl}
\bibliography{biblio}

\end{document}